\newtheorem{lem}{Lemma}
\newtheorem{thm}{Theorem}
\def\<{\leqslant}           
\def\>{\geqslant}           
\def\d{\partial}
\def\wt{\widetilde}
\def\Re{{\rm Re}}   
\def\Im{{\rm Im}}   
\def\col{{\rm vec}}   
\def\cH{{\cal H}}   
\def\mA{{\mathbb A}}    
\def\mR{{\mathbb R}}    
\def\mC{{\mathbb C}}    
\def\Tr{{\rm Tr}}       
\def\tr{{\rm tr}}       
\def\rT{{\rm T}}        
\def\diam{\diamond}       
\def\bE{{\bf E}}    
\def\[[[{[\![\![}
\def\]]]{]\!]\!]}
\def\bra{{\langle}}
\def\ket{{\rangle}}
\def\Bra{\big\langle}
\def\Ket{\big\rangle}
\def\re{{\rm e}}        
\def\rd{{\rm d}}        
\def\fO{{\mathfrak O}}
\def\cL{{\mathcal L}}
\def\bJ{{\bf J}}
\def\br{{\bf r}}
\def\x{\times}
\def\ox{\otimes}
\def\cZ{{\mathcal Z}}
\def\bH{{\mathbf H}}
\def\cF{{\cal F}}
\def\cW{{\mathcal W}}
\def\cX{{\mathcal X}}
\def\cM{{\cal M}}
\def\cV{{\cal V}}
\def\cC{{\cal C}}
\def\cR{{\cal R}}
\def\sB{{\sf B}}
\def\sC{{\sf C}}
\def\sE{{\sf E}}
\def\cA{{\cal A}}
\def\cB{{\cal B}}
\def\cE{{\cal E}}
\def\cov{{\bf cov}}
\def\mU{{\mathbb U}}
\def\mH{{\mathbb H}}
\def\mS{{\mathbb S}}
\def\veps{\varepsilon}
\def\ups{\upsilon}
\def\Ups{\Upsilon}
\title{\Large \bf
Characterization and
Moment Stability Analysis of \\ Quasilinear Quantum Stochastic Systems with \\ Quadratic Coupling to External Fields
}
\author{Igor G. Vladimirov, \qquad Ian R. Petersen%
\thanks{This work is supported by the Australian Research Council. The authors are with the School of Engineering and Information Technology, University of New South Wales at the Australian Defence Force Academy, Canberra, ACT 2600, Australia. E-mail: {\small\tt igor.g.vladimirov@gmail.com, i.r.petersen@gmail.com}.
}
}
\begin{document}
\maketitle
\thispagestyle{empty}



\begin{abstract}
The paper is concerned with open quantum systems whose Heisenberg dynamics are described by quantum stochastic differential equations driven by external boson fields. The system-field coupling operators are assumed to be quadratic polynomials of the system observables, with the latter satisfying canonical commutation relations. In combination with a cubic system Hamiltonian, this leads to a class of quasilinear quantum stochastic systems which retain algebraic closedness in the evolution of mixed moments of the observables. Although such a system is nonlinear and its quantum state is no longer Gaussian,  the dynamics of the moments of any order are amenable to exact analysis, including  the computation of their steady-state values. In particular, a generalized criterion is developed for quadratic stability of the quasilinear systems.
The results of the paper are applicable to the generation of non-Gaussian quantum states with manageable moments and an optimal design of linear quantum  controllers for quasilinear quantum plants.
\end{abstract}


\section{INTRODUCTION}\label{sec:intro}

The paper is concerned with open quantum systems whose Heisenberg dynamics are described by quantum stochastic differential equations (QSDEs) driven by external boson fields \cite{P_1992}. This class of dynamical systems,  whose variables are noncommutative operators on a Hilbert space, is a common source of models in quantum optics \cite{GZ_2004} and quantum control \cite{EB_2005,JG_2010,JNP_2008}. Central to this approach  is a Markovian description of interaction between the quantum-mechanical  system and its environment, whose ``energetics'' is specified by the system Hamiltonian and system-field coupling operators. The case where the Hamiltonian is quadratic and the coupling operators are linear  in the system observables, satisfying canonical commutation relations (CCRs) \cite{M_1998}, corresponds to an open quantum harmonic oscillator. Its importance for  linear quantum stochastic control \cite{EB_2005,NJP_2009,P_2010,VP_2011a} is explained by tractability of the dynamics of moments of the system observables, which is closely related to the invariance of the class of Gaussian quantum states \cite{KRP_2010} of the system subject to external fields in the vacuum state \cite{P_1992}. This allows the linear quantum control to inherit  at least some features of the classical control schemes  \cite{KS_1972}, including practical computability of quadratic costs.
In the present paper, we  consider more complicated system-field interactions, described by coupling operators which are quadratic polynomials of the system observables. In combination with a cubic system Hamiltonian, this leads to a novel class of \emph{quasilinear} quantum stochastic systems which extend the linear quantum dynamics above and yet retain algebraic closedness in the evolution of mixed moments of the observables, similarly to their classical counterparts in \cite{W_1966}. Although such a system is nonlinear and its quantum state is no longer Gaussian,  the dynamics of the moments of any order are amenable to exact analysis, which includes  the computation of their steady-state values. In particular, this allows a generalized criterion to be developed for quadratic stability of the quasilinear systems.
The results of the paper are applicable to the generation of non-Gaussian quantum states \cite{Y_2009,ZJ_2011b} with manageable moment dynamics. They can also be used for an optimal design of linear quantum  controllers for quasilinear quantum plants.

\section{OPEN QUANTUM SYSTEMS}\label{sec:class}

We will briefly review the underlying class of models for open quantum systems which is based on QSDEs.
Suppose  $W(t):= (W_k(t))_{1\< k \< m}$ is an $m$-dimensional  quantum Wiener process of time $t\> 0$ on a boson Fock space $\cF$ \cite{P_1992}, with the quantum Ito table
\begin{equation}
\label{Omega}
    \rd W\rd W^{\rT}
    :=
    (\rd W_j \rd W_k)_{1\< j,k\< m}
    =
    \Omega \rd t,
\end{equation}
where the time argument  of $W$ is omitted for the sake of brevity, and $\Omega:= (\omega_{jk})_{1\< j,k\< m}$ is a constant complex positive semi-definite Hermitian matrix of order $m$. Here, a complex number $\varphi$ is identified with the operator $\varphi I_{\cF}$, where $I_{\cF}$ denotes the identity operator on $\cF$. Also, vectors are  organised as columns unless indicated otherwise, and   the transpose $(\cdot)^{\rT}$ acts on matrices with operator-valued entries as if the latter were scalars. The  entrywise real part of the matrix $\Omega$, denoted by
\begin{equation}
\label{V}
    V:= (v_{jk})_{1\< j,k\< m}:=\Re \Omega,
\end{equation}
is a positive semi-definite symmetric matrix.
The entries $W_1(t), \ldots, W_m(t)$ of the vector $W(t)$ are self-adjoint operators on $\cF$, associated with the annihilation and creation operator  processes of external boson fields.  The imaginary part of the quantum Ito matrix  $\Omega$  in (\ref{Omega}) describes the CCRs between the fields in the sense that
\begin{equation}
\label{CCRW}
    [\rd W, \rd W^{\rT}]
      :=
    ([\rd W_j, \rd W_k] )_{1\< j,k\< m}\\
     =  i J\rd t,
    \quad
    J:= 2 \Im \Omega,
\end{equation}
where $[A, B]:= AB-BA$ is the commutator of operators which applies entrywise, and $i:= \sqrt{-1}$ is the imaginary unit. We consider an open quantum system (further referred to as the \textit{plant}) whose state variables $X_1(0), \ldots, X_n(0)$ at time $t = 0$ are self-adjoint operators on a complex separable Hilbert space $\cH$. Any such operator $\xi$ can be identified with its ampliation $\xi\ox I_{\cF}$ on the tensor  product space $\cH \ox \cF$. Suppose the plant interacts only with the external fields,   so that the density operator $\rho(t)$,  which describes the quantum state \cite{M_1998}
of the plant-field (as a closed system, isolated from the environment) on $\cH\ox \cF$ at time $t$, evolves in the Schr\"{o}dinger picture of quantum dynamics as
\begin{equation}
\label{rhot}
    \rho(t)
    =
    U(t) \rho(0)  U(t)^{\dagger}.
\end{equation}
Here, $U(t)$ is a unitary operator on the Hilbert space $\cH\ox \cF$, initialized at the identity operator $U(0)= I_{\cH \ox \cF}$, and $(\cdot)^{\dagger}$ is the operator adjoint.
The initial quantum state of the plant-field composite  system is assumed to be the tensor product
\begin{equation}
\label{rho0}
    \rho(0)
    :=
    \varpi(0)
    \ox
    \ups
\end{equation}
of the initial plant state $\varpi(0)$ on $\cH$ and the pure state $\ups := |0\ket\bra 0|$ of the external field associated with the vacuum vector $|0\ket$ in $\cF$, where use has been made of the Dirac bra-ket notation \cite{M_1998}.   In the Heisenberg picture, an observable $\xi(t)$ on $\cH\ox \cF$ evolves in a dual unitary fashion to (\ref{rhot}):
\begin{equation}
\label{xit}
    \xi(t) =
    U(t)^{\dagger} \xi(0) U(t),
\end{equation}
with the duality being understood in the sense of the equivalence
\begin{equation}
\label{bE}
    \bE\xi(t)
    :=
    \Tr(\rho(0) \xi(t) )
    =
    \Tr(\rho(t) \xi(0))
\end{equation}
between two representations of the quantum expectation. The unitary operator $U(t)$ itself is driven by the internal dynamics of the plant (which the plant would have in isolation from the surroundings) and by the plant-field interaction. In the weak interaction limit, which neglects the influence of the plant on the Markov structure of the field in the vacuum state, a wide class of open quantum systems is captured by the following QSDE:
\begin{equation}
\label{dU}
    \rd U
    =
    -((iH+h^{\rT} \Omega h/2)\rd t +ih^{\rT}\rd W  ) U.
\end{equation}
Here,
$H$ is the plant Hamiltonian,
and  $h:= (h_j)_{1\< j \< m} $ is a vector of plant-field coupling  operators, with $h_j$ pertaining to the interaction between the plant and the  $j$th external field. Both $H$ and $h_1, \ldots, h_m$ are self-adjoint operators on $\cH$, which are  usually functions of the plant observables $X_1(0), \ldots, X_n(0)$. The term $h^{\rT}\rd W = \sum_{k=1}^m h_k \rd W_k$ in (\ref{dU}), which can be interpreted as an incremental perturbation to the plant Hamiltonian $H$ due to the interaction with the external fields,   is an alternative form of the more traditional representation
\begin{equation}
\label{hL}
    h^{\rT}\rd W
    =
    i(L^{\rT}\rd \cA^{\#}- L^{\dagger} \rd \cA)
\end{equation}
through the $m/2$-dimensional field annihilation  and creation operator processes $\cA(t)$ and $\cA(t)^{\#}$ on the Fock space $\cF$ with the quantum Ito table
\begin{equation}
\label{dAA}
    \rd
    {\small\begin{bmatrix}
        \cA\\
        \cA^{\#}
    \end{bmatrix}}
    \rd
    {\small\begin{bmatrix}
        \cA^{\dagger} &
        \cA^{\rT}
    \end{bmatrix}}
    =
    {\small\begin{bmatrix}
        I_{m/2} & 0\\
        0 & 0
    \end{bmatrix}}
    \rd t,
\end{equation}
so that $[\rd \cA, \rd \cA^{\dagger}]:= \rd \cA \rd \cA^{\dagger}-\rd \cA^{\#}\rd \cA^{\rT} =   I_{m/2}\rd t$. Here, $m$ is assumed to be even, $(\cdot)^{\dagger} := ((\cdot)^{\#})^{\rT}$ denotes the transpose of the entrywise adjoint  $(\cdot)^{\#}$, and $I_r$ is the identity matrix of order $r$. In application to ordinary matrices, $(\cdot)^{\dagger}$ reduces to  the complex conjugate transpose $(\cdot)^* := (\overline{(\cdot)})^{\rT}$. The representation (\ref{hL}) corresponds to the case when the scattering matrix \cite{P_1992} is the identity matrix.
The vector $L := (L_k)_{1\< k\< m/2}$  consists of linear operators on $\cH$, which are not necessarily self-adjoint. The relation of $h$, $W$ with $L$, $\cA$ is described by
\begin{align}
\label{hW1}
    h
    & =
    \frac{1}{\sqrt{2}}
    \left({\small\begin{bmatrix}
        i & -i\\
        1 & 1
    \end{bmatrix}}
    \ox I_{m/2}\right)
    {\small \begin{bmatrix}
        L\\
        L^{\#}
    \end{bmatrix}},\\
\label{hW2}
    W
    & =
    \frac{1}{\sqrt{2}}
    \left({\small
    \begin{bmatrix}
        1 & 1\\
        -i & i
    \end{bmatrix}}
    \ox I_{m/2}
    \right)
    {\small\begin{bmatrix}
        \cA\\
        \cA^{\#}
    \end{bmatrix}},
\end{align}
where $\ox$ denotes the Kronecker product of matrices,
so that the corresponding quantum Ito matrix $\Omega$ in (\ref{Omega}) is $\Omega = \Big(I_m + i {\scriptsize\begin{bmatrix}0 & 1\\ -1 & 0\end{bmatrix}}\ox I_{m/2}\Big)\big/2$.
In general, the operator
$
    h^{\rT} \Omega h
    :=
    \sum_{j,k=1}^m
    \omega_{jk} h_j h_k
$,
which takes the form $h^{\rT} \Omega h = L^{\dagger}L = \sum_{k=1}^{m/2} L_k^{\dagger}L_k$ in the case (\ref{dAA})--(\ref{hW2}),
is self-adjoint since $(h^{\rT} \Omega h)^{\dagger} = \sum_{j,k=1}^m
\overline{\omega_{jk}} h_k h_j = h^{\rT} \Omega^* h = h^{\rT} \Omega h$. The formulation using $h$ and $W$, in principle,  allows $W$ to be an additive mixture of the quantum noise with a classical random component such as the standard Wiener process. In view of (\ref{CCRW}), the ``magnitude'' of $\Im \Omega=J/2$ in comparison with $\Re \Omega$ (measured, for example, by the spectral radius $\br(\Im \Omega (\Re \Omega)^{-1})$ which is well defined and is strictly less than one if $\Omega \succ 0$) indicates the relative amount of ``quantumness'' in $W$. This setting reduces to the classical noise situation if the matrix $\Omega$ in (\ref{Omega}) is real, in which case $U(t)$ becomes a random process with values among unitary operators on $\cH\ox \cF$; see \cite{K_1972} and \cite[pp. 258--260]{P_1992}. The general situation is treated
by applying the quantum Ito rule $\rd (\eta\zeta) = (\rd \eta) \zeta + \eta\rd \zeta + (\rd \eta)\rd \zeta$
and using (\ref{dU}) along with the unitarity of $U(t)$ and  commutativity between the forward increment $\rd W$ and the adapted processes. This yields the following QSDE for the density operator $\rho(t)$ in (\ref{rhot}):
\begin{equation}
\label{drhot}
    \rd \rho =
    -i([H,\rho]\rd t + [h^{\rT},\rho]\rd W)
    +
    \tr(\Omega^{\rT} C(\rho))\rd t,
\end{equation}
which is referred to as the stochastic quantum master equation  \cite{GZ_2004}  in the Schr\"{o}dinger picture. Here, use is made of a self-adjoint operator
$
    \tr(\Omega^{\rT} C(\rho))
    :=
    \sum_{j,k=1}^m
    \omega_{jk}
    C_{jk}(\rho)
$
on $\cH\ox \cF$, with $\tr(\cdot)$ denoting a ``symbolic'' trace (to be distinguished from the complex-valued trace $\Tr(\cdot)$ of an operator),
where the matrix $C(\rho):= (C_{jk}(\rho))_{1\< j,k\< m}$ has operator-valued entries
\begin{equation}
\label{Cjk}
    C_{jk}(\rho)
    :=
    h_k\rho h_j - (h_jh_k\rho + \rho h_jh_k)/2.
\end{equation}
Note that $C_{jk}(\rho)^{\dagger} = C_{kj}(\rho)$ for any self-adjoint operator $\rho$ in view of self-adjointness of $h_1, \ldots, h_m$, and this, together with $\Omega^* = \Omega$, ensures the self-adjointness of $\tr(\Omega^{\rT} C(\rho))$.

\section{DECOHERENCE OPERATOR}\label{sec:DDR}

In the Heisenberg picture, an observable $\xi(t)$ on the composite Hilbert space $\cH\ox \cF$, which undergoes  the evolution (\ref{xit}), satisfies the QSDE
\begin{equation}
\label{dxit}
    \rd \xi =
    i([H,\xi]\rd t + [h,\xi]^{\rT}\rd W)
    +
    \cL(\xi)\rd t.
\end{equation}
Here, both the plant Hamiltonian $H(t)$ and the vector $h(t)$ of plant-field coupling operators are also evolved  by the  flow (\ref{xit}). However,  they depend on the vector $X(t):= (X_k(t))_{1\< k \< n}$ of the plant observables in the same way as $H(0)$ and $h(0)$ do  on $X(0)$.  Also,  $\cL$ denotes the Gorini-Kossakowski-Sudarshan-Lind\-blad (GKSL) superoperator defined by
\begin{equation}
\label{cL}
    \cL(\xi)
    :=
    \tr(\Omega^{\rT} D(\xi))
    =
    \sum_{j,k=1}^m
    \omega_{jk}
    D_{jk}(\xi),
\end{equation}
where $D(\xi):= (D_{jk}(\xi))_{1\< j,k\< m}$ is a matrix with operator-valued entries
\begin{align}
\nonumber
    D_{jk}(\xi)
    & :=
    h_j\xi h_k - (h_jh_k\xi + \xi h_jh_k)/2\\
\label{Djk}
    & =
    (h_j[\xi,h_k] + [h_j,\xi]h_k)/2.
\end{align}
The superoperators $D_{jk}$ are dual  to $C_{jk}$ from (\ref{Cjk}) in the sense that $ \Tr(C_{jk}(\rho) \xi)=\Tr(\rho D_{jk}(\xi))$. The superoperator matrix $D$ acts on an observable $\xi$ as
\begin{equation}
\label{Dmat}
    D(\xi)
     =
    \big(h[\xi,h^{\rT}] + [h, \xi]h^{\rT}\big)/2.
\end{equation}
The superoperators $C_{jk}$ and their duals $D_{jk}$, defined by (\ref{Cjk}), (\ref{Djk}), play an important  role in  the generators of quantum dynamical semigroups \cite{
GKS_1976,L_1976}. One of such semigroups governs the evolution of the  reduced plant density operator $\varpi(t)$ obtained by ``tracing out'' the quantum noise in (\ref{drhot}) over the field vacuum state $\ups$ which yields an ODE
$    \dot{\varpi} =
    -i[H,\varpi]
    +
    \tr(\Omega^{\rT} C(\varpi))
$.
The generator of the corresponding semigroup in the dual Heisenberg picture is $i[H,\cdot] + \cL$, where the superoperator $\cL$,  given by (\ref{cL}), is responsible for decoherence \cite{GZ_2004}  understood
as the  deviation from a unitary evolution which the plant observables would have alone in the absence of interaction with the environment. It is convenient to apply the QSDE (\ref{dxit}) entrywise to the vector $X(t)$ of plant observables as
\begin{equation}
\label{XQSDEgen}
    \rd X =
    F\rd t + G\rd W.
\end{equation}
The $n$-dimensional \emph{drift vector} $F(t)$ and the \emph{dispersion} $(n\x m)$-matrix $G(t)$ of this QSDE, defined by
\begin{equation}
\label{FG}
    F
    :=
    i[H,X] +\cL(X),
    \qquad
        G
    :=
    - i[X,h^{\rT}],
\end{equation}
are completely specified by the plant Hamiltonian $H$, the quantum Ito matrix  $\Omega$ of the field process $W$ from (\ref{Omega}), and the vector $h$ of plant-field coupling operators.

\begin{lem}
\label{lem:DDR}
The GKSL superoperator (\ref{cL}), applied to the vector $X$ of plant observables,  can be computed in terms of the dispersion matrix $G$ from (\ref{FG}) as
\begin{equation}
\label{DDR}
    \cL(X)
    =
    \frac{1}{2}
    \Big(
        GJh + i    \sum_{j,k=1}^m
    \omega_{jk}
    [h_j,g_k]
    \Big).
\end{equation}
Here, the matrix $J$ is defined by (\ref{CCRW}), and $g_1, \ldots, g_m$ denote the columns of $G$:
\begin{equation}
\label{gk}
    g_k = i[h_k, X].
\end{equation}
\end{lem}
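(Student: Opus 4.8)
The plan is to substitute the entries of the dispersion matrix into the GKSL superoperator and then reorganize the resulting double sum by exploiting the Hermitian symmetry of $\Omega$. First I would apply $\cL$ to $X$ entrywise using (\ref{cL}) together with the second form of $D_{jk}$ in (\ref{Djk}), which gives $\cL(X) = \frac{1}{2}\sum_{j,k=1}^m \omega_{jk}\big(h_j[X,h_k] + [h_j,X]h_k\big)$. Reading off the columns of $G = -i[X,h^{\rT}]$ via (\ref{FG}), the commutators can be rewritten in terms of the vectors $g_k$ from (\ref{gk}): since $g_k = i[h_k,X]$, one has $[h_j,X] = -ig_j$ and $[X,h_k] = ig_k$. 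Substituting these yields the compact expression $\cL(X) = \frac{i}{2}\sum_{j,k=1}^m \omega_{jk}\big(h_j g_k - g_j h_k\big)$.

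The key step is to split the summand into a commutator part and a residual part via $h_j g_k - g_j h_k = [h_j,g_k] + (g_k h_j - g_j h_k)$. The first piece immediately produces $\frac{i}{2}\sum_{j,k=1}^m\omega_{jk}[h_j,g_k]$, which is exactly the second summand on the right-hand side of (\ref{DDR}). For the residual $\frac{i}{2}\sum_{j,k=1}^m\omega_{jk}(g_k h_j - g_j h_k)$, I would relabel $j\leftrightarrow k$ in the first of the two terms to obtain $\frac{i}{2}\sum_{j,k=1}^m(\omega_{kj}-\omega_{jk})\,g_j h_k$. Here I would invoke the Hermitian property $\omega_{kj} = \overline{\omega_{jk}}$ of the quantum Ito matrix $\Omega$ from (\ref{Omega}), so that $\omega_{kj}-\omega_{jk} = -2i\,\Im\omega_{jk}$.

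Combining the prefactor $\frac{i}{2}$ with $-2i$ then collapses the residual into $\sum_{j,k=1}^m \Im(\omega_{jk})\,g_j h_k$. Finally, recognizing that the $g_k$ are the columns of $G$ and that $J = 2\Im\Omega$ by (\ref{CCRW}), this sum is precisely the matrix--vector product $\frac{1}{2}GJh = \frac{1}{2}\sum_{j,k=1}^m J_{jk}\,g_j h_k$, which completes the identification with (\ref{DDR}).

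The calculation itself is routine; the only real subtlety---and the step most prone to sign and bookkeeping errors---is the index relabeling combined with the use of Hermitian symmetry to convert the antisymmetric part of the double sum into $\Im\Omega$. Care must also be taken to treat $X$ and each $g_k$ as vectors of operators throughout, so that $GJh$ is correctly interpreted as a vector whose $\ell$th component is $\sum_{j,k=1}^m (g_j)_\ell\, J_{jk}\, h_k$, and so that the noncommutativity of the $h_j$ and $g_k$ is respected when separating out the commutator term.
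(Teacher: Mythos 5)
Your proof is correct and follows essentially the same route as the paper's: both arguments reduce $\cL(X)$ to a commutator term plus a residual double sum, and both extract the factor $J=2\Im\Omega$ from the Hermitian symmetry $\omega_{kj}=\overline{\omega_{jk}}$ via an index relabelling. The only cosmetic difference is that the paper packages the relabelling step through the entrywise adjoint $(\cdot)^{\#}$ of $G\Omega h$, whereas you perform the swap $j\leftrightarrow k$ directly in the sum.
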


\begin{proof} In view of the antisymmetry of the commutator,  it follows  from (\ref{FG}) that
\begin{equation}
\label{DDR1}
    \sum_{j,k=1}^m
    \omega_{jk}
        [h_j, X]h_k
         =
        -[X,h^{\rT}] \Omega h
        =
        -i G \Omega h.
\end{equation}
Therefore, since the quantum Ito matrix $\Omega$ in (\ref{Omega}) is Hermitian, then
\begin{equation}
\label{DDR2}
    \!\!\Big(\!
    \sum_{j,k=1}^m\!\!\!
    \omega_{jk} h_j[X,h_k]
    \Big)^{\#}\!\!
     = \!\!
        \sum_{j,k=1}^m\!\!
    \overline{\omega_{jk}} [h_k,X]h_j\!\!
     = \!\!
        \sum_{j,k=1}^m\!\!
    \omega_{jk} [h_j,X]h_k.\!\!\!\!\!\!
\end{equation}
By combining (\ref{DDR1}) with (\ref{DDR2}), it follows from (\ref{cL}), (\ref{Djk}) that
\begin{equation}
\label{DDR3}
    \cL(X)
    =
    i
    \big(
        (G \Omega h)^{\#}
        -
        G \Omega h
    \big)/2.
\end{equation}
In terms of the columns of the dispersion matrix $G$ in (\ref{gk}),
\begin{align}
\nonumber
    (G\Omega h)^{\#}
     = &
    \Big(
    \sum_{j,k=1}^m
    \omega_{jk}  g_j h_k
    \Big)^{\#}=
    \sum_{j,k=1}^m
    \overline{\omega_{jk}}
    (g_j h_k -[g_j,h_k])\\
\label{DDR4}
     = &
    G\overline{\Omega} h
    +
    \sum_{j,k=1}^m
    \omega_{jk}
    [h_j,g_k].
\end{align}
By substituting (\ref{DDR4}) into (\ref{DDR3}) and using the relationship $i(\overline{\Omega}-\Omega) = J$ from (\ref{CCRW}), it follows that
$    \cL(X)
    =
    i\Big(G (\overline{\Omega}-\Omega) h +
    \sum_{j,k=1}^m
    \omega_{jk}
    [h_j,g_k]
\Big)\big/2
     =
    GJh/2 + i\sum_{j,k=1}^m
    \omega_{jk}
    [h_j,g_k]/2
$,
which establishes (\ref{DDR}).\end{proof}

In the next section, we will employ Lemma~\ref{lem:DDR} in order to review the computation of the drift vector  $F$ and the dispersion matrix $G$ for a class of linear open quantum systems.

\section{LINEAR PLANT-FIELD COUPLING}\label{sec:lincoup}

Omitting the time dependence, suppose the plant observables $X_1, \ldots, X_n$, which are assembled into the  vector $X$,  satisfy CCRs
\begin{equation}
\label{Theta}
    [X, X^{\rT}]
    :=
    \big(
        [X_j,X_k]
    \big)_{1\< j,k\< n}
    =
    i
    \Theta,
\end{equation}
where $\Theta:= (\theta_{jk})_{1\<j, k\< n}$ is a constant real antisymmetric matrix of order $n$ (we denote the space of such matrices by $\mA_n$). 
In this case, if the plant-field coupling operators $h_1, \ldots, h_m$ are polynomials of degree $r$ in the plant observables, then  the entries of the dispersion matrix $G$ in (\ref{FG}) and the vector  $\cL(X)$ in (\ref{DDR}) are polynomials of degrees $r-1$ and $2r-1$, respectively. This property follows from the reduction of a polynomial degree under taking the commutator with the observables due to the CCRs (\ref{Theta}):
\begin{equation}
\label{polred}
    [
        \Xi_k,
        X_{\ell}
    ]
    =
    i
    \sum_{j=1}^r
    \theta_{k_j \ell}
    \Xi_{k_1\ldots k_{j-1} k_{j+1}\ldots k_r},
\end{equation}
where
\begin{equation}
\label{Xi}
            \Xi_k
    :=
    X_{k_1} \x \ldots \x X_{k_r}
\end{equation}
denotes a degree $r$ monomial of the plant observables specified by an $r$-index $k:= (k_1, \ldots, k_r) \in \{1,\ldots, n\}^r$, with the order of multiplication being essential in the noncommutative case. The right-hand side of (\ref{polred}) is a polynomial of degree $r-1$.
In the plant-field interaction model which  is used in linear quantum control \cite{EB_2005,JNP_2008,NJP_2009,P_2010},  the vector $h$ of coupling operators depends linearly on $X$ in the sense that
\begin{equation}
\label{hM}
    h:= MX
\end{equation}
for some matrix $M\in \mR^{m\x n}$. In this case, the dispersion matrix $G$ in (\ref{FG}) becomes a constant real matrix,  since
\begin{equation}
\label{BM}
    G
    =
    -i[X,h^{\rT}]
    =
    -i[X, X^{\rT}]M^{\rT}
    =
    \Theta M^{\rT},
\end{equation}
where the bilinearity of the commutator is combined with the CCRs (\ref{Theta}). In view of Lemma~\ref{lem:DDR},  this allows $\cL(X)$ to inherit from $h$  the linearity with respect to the plant observables.

\begin{lem}
In the case of CCRs (\ref{Theta})  and linear plant-field coupling (\ref{hM}), the vector $X$ of plant observables satisfies a QSDE
\begin{equation}
\label{XQSDE}
    \rd X =
    (i[H,X]+KX)\rd t + B\rd W,
\end{equation}
where the matrices $K \in\mR^{n\x n}$ and $B \in \mR^{n\x m}$ are related to the the matrix $J$ from (\ref{CCRW}) by
\begin{equation}
\label{KB}
    \qquad
    K:= BJM/2,
    \qquad
    B:= \Theta M^{\rT}.
\end{equation}
\end{lem}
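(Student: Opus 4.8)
The plan is to read off the QSDE (\ref{XQSDE}) from the general QSDE (\ref{XQSDEgen}) by evaluating the drift vector $F$ and the dispersion matrix $G$ of (\ref{FG}) in the special case of linear coupling (\ref{hM}). The dispersion matrix has already been computed in (\ref{BM}), where the bilinearity of the commutator together with the CCRs (\ref{Theta}) gives the \emph{constant real} matrix $G = \Theta M^{\rT} = B$. Thus the noise term $G\rd W$ in (\ref{XQSDEgen}) immediately becomes $B\rd W$, and $B = \Theta M^{\rT}$ is precisely the second relation in (\ref{KB}). It therefore remains only to show that the GKSL contribution to the drift reduces to $\cL(X) = KX$ with $K = BJM/2$; once this is in hand, $F = i[H,X] + \cL(X) = i[H,X] + KX$ delivers (\ref{XQSDE}).

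For the drift I would invoke Lemma~\ref{lem:DDR}, which expresses $\cL(X) = \tfrac12\big(GJh + i\sum_{j,k=1}^m \omega_{jk}[h_j,g_k]\big)$ in terms of the columns $g_k = i[h_k,X]$ of $G$ from (\ref{gk}). The crucial observation is that, under linear coupling (\ref{hM}), each $g_k$ is a \emph{constant} vector: by (\ref{BM}) it is the $k$th column of $G = \Theta M^{\rT}$, so each of its entries is a real scalar, identified with a multiple of the identity operator. Consequently every commutator $[h_j,g_k]$ vanishes, since the coupling operators $h_j$ commute with scalars, and the entire double-sum term in the formula of Lemma~\ref{lem:DDR} drops out.

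With that term eliminated, the expression collapses to $\cL(X) = \tfrac12 GJh$. Substituting $G = B$ from (\ref{BM}) and $h = MX$ from (\ref{hM}) yields $\cL(X) = \tfrac12 BJMX = KX$ with $K = BJM/2$, which is exactly the first relation in (\ref{KB}). Assembling the drift as $F = i[H,X] + KX$ and combining it with the dispersion $G = B$ in (\ref{XQSDEgen}) then produces (\ref{XQSDE}).

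The argument carries no genuine obstacle; the single step that does all the work — and the only one requiring care — is recognizing that linear coupling renders the columns $g_k$ constant, so that the residual commutator term in Lemma~\ref{lem:DDR} disappears. This is precisely the mechanism by which $\cL(X)$ inherits its linearity in $X$ from the coupling $h$, as anticipated in the remark preceding the statement, and it is also what fails once $h$ is taken to be of higher degree in the observables.
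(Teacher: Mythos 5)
Your argument is correct and follows the paper's own proof exactly: both rely on the constancy of $G=\Theta M^{\rT}$ from (\ref{BM}) to kill the commutators $[h_j,g_k]$ in Lemma~\ref{lem:DDR}, reduce $\cL(X)$ to $GJh/2=\Theta M^{\rT}JMX/2=KX$, and then substitute into (\ref{FG}) and (\ref{XQSDEgen}). No gaps; the level of detail you give on why the $g_k$ are scalars is slightly fuller than the paper's one-line justification but the mechanism is identical.
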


\begin{proof}
The constancy of the dispersion matrix $G$, computed  in (\ref{BM}), implies that the commutators on the right-hand side of (\ref{DDR}) vanish, that is, $[h_j, g_k] = 0$ for all $1\< j,k\< m$. Therefore,
\begin{equation}
\label{cLlin}
    \cL(X)
    =
        GJh/2
        =
        \Theta M^{\rT}J MX/2.
\end{equation}
The QSDE (\ref{XQSDE}) can now obtained from (\ref{XQSDEgen}) by substituting (\ref{BM}) and (\ref{cLlin}) into (\ref{FG}) and using the notation (\ref{KB}). \end{proof}

Since $K$ and $B$ are constant matrices, the QSDE (\ref{XQSDE})  may acquire nonlinearity with respect to the plant observables only through a nonquadratic  part of the plant Hamiltonian $H$. Indeed, if $H$ is a quadratic polynomial, that is,
\begin{equation}
\label{Hquad}
    H
    :=
    \sum_{k=1}^n
    \Big(
        \gamma_k
        +
        \frac{1}{2}
        \sum_{j=1}^n
        r_{jk}X_j
    \Big)
    X_k
    =
    (\gamma+RX/2)^{\rT} X,
\end{equation}
where $\gamma:= (\gamma_k)_{1\< k\< n} \in \mR^n$ is a given real vector, and $R:= (r_{jk})_{1\<j,k\< n}$ is a given real symmetric matrix of order $n$ (we denote the space of such matrices by $\mS_n$), then the commutator identities \cite[Eq. (3.50) on p. 38]{M_1998}) and the CCRs (\ref{Theta}) imply that
\begin{align}
\nonumber
    i[H,X]
     =&
    -i
    \sum_{k=1}^n
    \Big(
        \gamma_k [X, X_k]
        +
        \frac{1}{2}
        \sum_{j=1}^n
        r_{jk}
        [X, X_j X_k]
    \Big)\\
\nonumber
     = &
        \Theta \gamma
    -\frac{i}{2}
    \sum_{j,k=1}^n
    r_{jk}
    \big(
        [X,X_j]X_k
        +
        X_j[X,X_k]
    \big)\\
\label{HXquad}
     = &
        \Theta \gamma
    +\frac{1}{2}
    \sum_{j,k=1}^n
    r_{jk}
    (\Theta_j X_k + \Theta_k X_j)
     =
    \Theta (\gamma + R X),
\end{align}
where $\Theta_{\ell } := (\theta_{k\ell})_{1\< k\< n}$ denotes the $\ell$th column of the CCR matrix $\Theta$. Since the right-hand side of (\ref{HXquad})
depends affinely on $X$, then
(\ref{XQSDE}) becomes linear with respect to the plant observables:
\begin{equation}
\label{XQSDElin}
    \rd X =
    (AX + \Theta \gamma)\rd t + B\rd W,
    \qquad
    A := \Theta R + K,
\end{equation}
which corresponds to an open  quantum harmonic oscillator \cite{EB_2005}, that is, a common model employed in linear quantum control.

\section{ALGEBRAIC CLOSEDNESS IN MOMENT DYNAMICS}\label{sec:mom}

The linearity of the QSDE (\ref{XQSDElin}) ensures \emph{algebraic closedness} in the evolution of the mixed moments  of the plant observables, defined by
\begin{equation}
\label{muXi}
    \mu_k(t)
    :=
    \bE \Xi_k(t)
\end{equation}
in terms of the quantum expectation (\ref{bE}) applied to the monomials (\ref{Xi}).  The closedness means that, for any positive integer $r$ and any $r$-index $k$, the time derivative $\dot{\mu}_k$ can be expressed in terms of the mixed moments of order $r$ and lower.  This property is a corollary of  the following general result.

\begin{lem}
\label{lem:mom}
For any positive integer $r$ and any $r$-index $k := (k_1, \ldots, k_r)\in \{1, \ldots, n\}^r$,
the mixed moment $\mu_k$ from (\ref{muXi})  for the plant observables, governed by the QSDE (\ref{XQSDEgen}),  satisfies
\begin{align}
\nonumber
    \dot{\mu}_k
     =&
    \sum_{j=1}^r
    \bE (\Xi_{k_1\ldots k_{j-1}}F_{k_j}\Xi_{k_{j+1}\ldots k_r})\\
\label{mudot}
        +&\!\!\!\!
        \sum_{s,u=1}^m
        \!\!\omega_{su}\!\!\!\!\!\!
        \sum_{1\< j<\ell \< r}\!\!\!\!\!\!
        \bE
        (
        \Xi_{k_1\ldots k_{j-1}}
        g_{k_j s}
        \Xi_{k_{j+1}\ldots k_{\ell-1}}
        g_{k_{\ell}u}
        \Xi_{k_{\ell +1}\ldots k_r}
        ).\!\!\!\!
\end{align}
Here, $F_p$ is the $p$th entry of the drift vector $F$, and $g_{ps}$ denotes the $(p,s)$th entry of the dispersion matrix $G$.
\end{lem}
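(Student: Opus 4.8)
The plan is to compute $\dot\mu_k = \bE\,\rd\Xi_k/\rd t$ by applying the quantum Ito rule to the monomial $\Xi_k = X_{k_1}\cdots X_{k_r}$, where each factor $X_{k_j}$ evolves according to the QSDE~(\ref{XQSDEgen}), $\rd X = F\rd t + G\rd W$. First I would write out the Ito differential of a product of $r$ factors. Because $\rd(\eta\zeta)=(\rd\eta)\zeta+\eta\,\rd\zeta+(\rd\eta)\rd\zeta$, iterating over the $r$ factors produces three kinds of contributions: the \emph{first-order} terms in which exactly one factor $X_{k_j}$ is replaced by its increment $\rd X_{k_j}=F_{k_j}\rd t + \sum_s g_{k_j s}\rd W_s$, and the \emph{second-order} (Ito-correction) terms in which two distinct factors $X_{k_j}$ and $X_{k_\ell}$ are simultaneously replaced by their increments, the product $\rd X_{k_j}\,\rd X_{k_\ell}$ being evaluated through the Ito table. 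Terms replacing three or more factors vanish, since all higher products of increments are zero.

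Next I would evaluate each class of term. In the first-order contributions, only the $F_{k_j}\rd t$ pieces survive after taking the expectation: the stochastic pieces $g_{k_j s}\rd W_s$ have vanishing expectation because $\rd W$ is a forward increment commuting with the adapted operators and annihilating the vacuum. This yields the first sum in~(\ref{mudot}), namely $\sum_{j=1}^r \bE(\Xi_{k_1\ldots k_{j-1}}F_{k_j}\Xi_{k_{j+1}\ldots k_r})$, with $F_{k_j}$ inserted in the correct position inside the ordered monomial. For the second-order terms I would use the Ito table~(\ref{Omega}), $\rd W\,\rd W^{\rT}=\Omega\,\rd t$, so that a pair of increments at positions $j<\ell$ contributes $\sum_{s,u=1}^m \omega_{su}\, g_{k_j s}\,g_{k_\ell u}\,\rd t$, inserted between the surviving factors. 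Collecting these over all ordered pairs $1\< j<\ell\< r$ gives exactly the double sum in~(\ref{mudot}). The ordering $j<\ell$ is automatic because the product rule multiplies the earlier factor's increment on the left of the later one, and it is essential to keep this ordering since the $g_{ps}$ and the monomials do not commute.

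The main subtlety—and the step I expect to require the most care—is the bookkeeping of noncommutative ordering: when expanding $\rd\Xi_k$ I must keep each increment exactly in the slot of the factor it replaces, so that $F_{k_j}$ and the pair $g_{k_j s},g_{k_\ell u}$ land between the correct partial monomials $\Xi_{k_1\ldots k_{j-1}}$, $\Xi_{k_{j+1}\ldots k_{\ell-1}}$, and $\Xi_{k_{\ell+1}\ldots k_r}$. A clean way to organize this is to argue by induction on $r$, using the product rule to peel off the last factor $X_{k_r}$ and invoking the inductive hypothesis on $\Xi_{k_1\ldots k_{r-1}}$; the new first-order term at position $r$ and the new cross terms pairing position $r$ with each earlier position $\ell<r$ then extend the two sums to their stated ranges. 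Finally, taking $\bE$ throughout and dividing by $\rd t$ gives~(\ref{mudot}), completing the proof.
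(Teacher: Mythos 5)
Your proposal follows essentially the same route as the paper: apply the (multivariate) quantum Ito product rule to $\Xi_k$, keep the single-increment terms and the ordered pairwise Ito corrections evaluated through the table (\ref{Omega}), discard higher-order increment products, and then average so that only the drift and the $\omega_{su}$-weighted cross terms survive, the martingale part vanishing by the vacuum structure (\ref{rho0}). The induction on $r$ you suggest is just a way of organizing the same expansion the paper writes down directly, so there is no substantive difference.
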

\begin{proof}
By applying a multivariate version of the quantum Ito formula to $\Xi_k$ and using the QSDE (\ref{XQSDEgen}) together with the quantum Ito product rules \cite{P_1992} $(\rd t)^2 = 0$, $(\rd t) \rd W = 0$  and (\ref{Omega}), it follows that
\begin{align}
\nonumber
    \rd \Xi_k
     = &
    \sum_{j=1}^r
    \Xi_{k_1\ldots k_{j-1}}
    \rd X_{k_j} \\
\nonumber
    &\x
    \Big(
        \Xi_{k_{j+1}\ldots k_r}
        +
        \sum_{\ell = j+1}^r
        \Xi_{k_{j +1}\ldots k_{\ell-1}}
        (\rd X_{k_{\ell}})
        \Xi_{k_{\ell +1}\ldots k_r}
    \Big)\\
\nonumber
    = &
    \sum_{j=1}^r
    \Xi_{k_1\ldots k_{j-1}}
    \Big(
        (F_{k_j} \rd t + g_{k_j\bullet} \rd W)
        \Xi_{k_{j+1}\ldots k_r}\\
\label{dXik}
    &
        +
        \sum_{s,u=1}^m
        \omega_{su}
        g_{k_j s}
        \sum_{\ell = j+1}^r
        \Xi_{k_{j+1}\ldots k_{\ell-1}}
        g_{k_{\ell}u}
        \Xi_{k_{\ell +1}\ldots k_r}
    \Big),
\end{align}
where $g_{p\bullet}$ denotes the $p$th row of the dispersion matrix $G$.
The ODE (\ref{mudot}) can now be obtained by averaging both sides  of (\ref{dXik}) and using the special structure (\ref{rho0}) of the quantum state $\rho(0)$.
\end{proof}

It follows from Lemma~\ref{lem:mom} that if $F$ is an affine function of $X$ and the dispersion matrix $G$ is constant, as in the case where the plant-field coupling operators are linear in $X$, satisfying the CCRs,  then the right-hand side of the ODE (\ref{mudot}) is a linear combination of the mixed moments $\mu_{\nu}$, where the multiindices $\nu:= (\nu_1, \ldots, \nu_d)$ have dimensions  $d\< r$. More precisely,
\begin{equation}
\label{momclos}
    \dot{\mu}_k
    =
    \sum_{d=0}^r
    \sum_{\nu \in \{1, \ldots, n\}^d}
    \psi_{k,\nu} \mu_{\nu},
    \qquad
    k \in \{1, \ldots, n\}^r,
\end{equation}
where $\psi_{k,\nu}$ are complex numbers which are found from (\ref{mudot}), and the convention $\mu_{\emptyset}:= 1$ is used for the moment associated with the $0$-index.  Equivalently, an infinite dimensional vector
$
    \mu
    :=
    (\mu_k)_{k\in \{1,\ldots, n\}^r,\, r\> 0}
$,
formed by $\mu_{\emptyset}$ and the mixed moments $\mu_k$ for all possible $n^r$ multiindices $k\in \{1\, \ldots, n\}^r$ of orders $r=1,2,3,\ldots$, satisfies a system of linear ODEs $\dot{\mu} = \Psi \mu$, where $\Psi:= (\psi_{k,\nu})$ is an infinite-dimensional block-lower triangular matrix. The diagonal block  of $\Psi$ associated with the moments of order $r$ is a matrix of order $n^r$.       Hence, the solution of the system of ODEs (\ref{momclos}) can be represented as $\mu(t) = \re^{\Psi t}\mu(0)$, provided all the moments of the initial plant state $\varpi(0)$ are finite. The matrix exponential $\re^{\Psi t}$ is practically computable  due to the block-lower triangular structure  of $\Psi$.  Thus, the algebraic closedness (\ref{momclos}) allows the system of linear ODEs for the moments (\ref{muXi}) to be integrated (numerically or analytically) recursively with respect to   $r$, starting from the mean values of the plant observables for $r=1$. In particular, the mean vector and the quantum covariance matrix
\begin{equation}
\label{alphaS}
    \alpha
    :=
    \bE X,
    \qquad
    S
    :=
    \cov(X)
    =
    \bE(XX^{\rT}) - \alpha\alpha^{\rT},
\end{equation}
with the latter consisting of \emph{central} moments  of second order, satisfy the ODEs
\begin{equation}
\label{alphaSdot}
    \dot{\alpha}
    =
    A \alpha +\Theta \gamma,
    \qquad
    \dot{S}
    =
    AS + S A^{\rT} + B\Omega B^{\rT},
\end{equation}
which allow  the steady-state values of the moments to be found from the appropriate algebraic equations
if
the matrix $A$, defined in (\ref{XQSDElin}), is Hurwitz.  Now, a similar reasoning shows that the moment dynamics (\ref{mudot}) retains the algebraic closedness (\ref{momclos}) in a more general case where \emph{both} the drift vector $F$ and the dispersion matrix $G$ of the QSDE (\ref{XQSDEgen}) are \emph{affine} functions of $X$ (in the linear case above, $G$ was constant). This corresponds to a wider class of open quantum systems introduced in the next section.



\section{QUASILINEAR OPEN QUANTUM SYSTEMS}\label{sec:qlin}

Retaining the assumption of Section~\ref{sec:lincoup} that the plant observables satisfy the CCRs (\ref{Theta}), we will now consider a wider class of plant-field interactions in which the coupling operators $h_1, \ldots, h_m$ are \emph{quadratic} polynomials of the plant observables:
\begin{equation}
\label{hMF}
    h_j = (M_j  + Y_j^{\rT}/2) X,
    \qquad
    Y_j := R_j X.
\end{equation}
Here, $M_j$ denotes the $j$th row of a matrix $M\in \mR^{m\x n}$, which describes the linear part of the coupling as in (\ref{hM}), and $R_1, \ldots, R_m \in \mS_n$ are given matrices which specify the quadratic part. An equivalent vector-matrix form of (\ref{hMF}) is
\begin{equation}
\label{hMY}
    h = (M + Y^{\rT}/2) X,
    \qquad
        Y
    :=
    \begin{bmatrix}
        Y_1
        &
        \ldots
        &
        Y_m
    \end{bmatrix},
\end{equation}
where $Y$ is an $(n\x m)$-matrix with columns $Y_1, \ldots, Y_m$ whose entries are linear combinations of the plant observables.
In the case of quadratic plant-field coupling (\ref{hMF}), an argument, similar to the derivation of (\ref{HXquad}) from (\ref{Hquad}), allows the $k$th column (\ref{gk}) of the dispersion matrix $G$ from (\ref{FG}) to be computed as
\begin{equation}
\label{gkY}
    g_k
    =
    i[(M_k  + X^{\rT}R_k/2) X,\, X]
    =
    \Theta (M_k^{\rT} + Y_k),
\end{equation}
where $M_k^{\rT}$ is the $k$th column of the matrix $M^{\rT}$, which, in view of (\ref{hMY}), implies that
\begin{equation}
\label{GMY}
    G
    =
    \Theta (M^{\rT} + Y)
    =
    B
    +
    \Theta
        \begin{bmatrix}
            R_1 X & \ldots & R_m X
        \end{bmatrix},
\end{equation}
where the matrix $B$ is defined by (\ref{KB}).
Therefore, the entries of $G$ are affine functions of the plant observables. From (\ref{hMF}) and (\ref{gkY}), it follows that the contribution of the operators
\begin{align}
\nonumber
    i[h_j, g_k]
     = &
    i[(M_j  + Y_j^{\rT}/2) X,\, \Theta R_k X ]\\
\label{hjgk}
     = &
    i\Theta R_k [(M_j  \!+\! X^{\rT}R_j/2) X,\, X ]
     \!=\!
    \Theta R_k
    \Theta (M_j^{\rT} \!+\! Y_j)\!\!\!\!\!
\end{align}
to the right-hand side of (\ref{DDR}) is linear with respect to  the plant observables:
\begin{equation}
\label{lincontr}
    i
    \sum_{j,k=1}^m
    \omega_{jk}
    [h_j, g_k]
    =
    \sum_{j,k=1}^m
    \omega_{jk}
    \Theta R_k
    \Theta (M_j^{\rT} + R_j X).
\end{equation}
Thus, in the case of canonically commuting plant observables and quadratic plant-field coupling (\ref{hMF}), the dispersion matrix $G$ is an affine function of $X$, while $\cL(X)$, given by  (\ref{DDR}),  is a \emph{cubic} polynomial of $X$. The latter property suggests finding a Hamiltonian $H$ in the form of a \emph{quartic} (degree four) polynomial of the plant observables such that the corresponding cubic polynomial $i[H,X]$ counterbalances the quadratic and cubic terms in $\cL(X)$, thus making the drift vector $F$ in (\ref{FG}) an affine function of $X$:
\begin{equation}
\label{Faff}
    F = AX + \beta,
\end{equation}
where $A\in \mR^{n\x n}$ and $\beta \in \mR^n$.
 Together with $G$ depending affinely on $X$ as described by (\ref{GMY}), the resulting  quantum plant, governed by the QSDE
 \begin{align}
\nonumber
    \rd X
    & =
    (AX+\beta) \rd t
    +
    \Theta(M^{\rT} + Y) \rd W\\
 \label{qlin}
    & =
    (AX+\beta) \rd t
    +
    \Theta
    \sum_{j=1}^m
    (M_j^{\rT} + R_jX) \rd W_j,
 \end{align}
 which we will refer to as a  \emph{quasilinear} open quantum system, retains the algebraic closedness (\ref{momclos}) in the moment dynamics (\ref{mudot}) as discussed in Section~\ref{sec:mom}. In the next section, we will show that the problem of finding such a Hamiltonian $H$ is simplified significantly  by taking  physical realizability conditions into account.
\section{PRESERVATION OF CANONICAL COMMUTATION RELATIONS}\label{sec:CCR}

The commutator of two observables $\eta(t)$ and $\zeta(t)$ on the product space $\cH\ox \cF$ inherits the evolution (\ref{xit}) with the unitary matrix $U(t)$:
\begin{equation}
\label{UU}
    [\eta(t), \zeta(t)]
    =
    U(t)^{\dagger}
    [\eta(0), \zeta(0)]
    U(t).
\end{equation}
Hence, if $\eta(0)$ and $\zeta(0)$ satisfy a CCR, that is, if $[\eta(0),\zeta(0)] = \varphi$ is the identity operator $I_{\cH\ox \cF}$ up to a complex multiplier $\varphi$, then the unitarity of $U(t)$ and (\ref{UU}) imply that  $[\eta(t), \zeta(t)] = \varphi U(t)^{\dagger}U(t) = [\eta(0),\zeta(0)]$ for all $t\> 0$. Therefore, any CCR between observables on the space $\cH\ox \cF$ is preserved in time. In particular, if the plant observables $X_1(0), \ldots, X_n(0)$ are in CCRs with each other, then, the preservation of these CCRs is a necessary condition for physical realizability (PR) of a QSDE of the form (\ref{XQSDEgen}). Here, in accordance with \cite{JNP_2008,NJP_2009} in the linear case and \cite{MP_2012} for nonlinear systems, PR is understood as existence of a plant-field energetics model, specified by the pair $(H,h)$, which generates the particular  drift vector $F$ and dispersion matrix $G$ as described by (\ref{FG}).
We will now obtain CCR preservation conditions for the quasilinear quantum plant governed by the QSDE (\ref{qlin}) which corresponds to  (\ref{XQSDEgen}) with the drift vector $F$ and dispersion matrix $G$ given by (\ref{Faff}) and (\ref{GMY}), respectively.  To this end, let $\cM$ and $\cR$ denote linear operators which map an $n$-dimensional vector $u$ and a matrix $T$ of order $n$ to two matrices of order $n$:
\begin{align}
\label{cM}
    \cM(u)
    & :=
    \Theta
    \sum_{j,k=1}^m
    J_{jk}
    (M_j^{\rT} u^{\rT} R_k + R_j u M_k )
    \Theta,\\
\label{cR}
    \cR(T)
    & :=
    \Theta
    \sum_{j,k=1}^m
    J_{jk}
    R_j T R_k
    \Theta.
\end{align}
The significance of these operators 
is clarified by the following lemma which is instrumental to the proof of Theorem~\ref{th:CCR}.

\begin{lem}
\label{lem:const}
The matrix $GJG^{\rT}$, associated with the dispersion matrix $G$ in (\ref{GMY}),  is a constant complex matrix (independent of $X$ and the  initial quantum state of the plant) if and only if the operators $\cM$ and $\cR$, defined by (\ref{cM}) and (\ref{cR}), both vanish on $\mR^n$ and $\mS_n$, respectively. In this case,
\begin{equation}
\label{GJGconst}
    GJG^{\rT}
    =
    BJB^{\rT}
    -
    i\cR(\Theta)/2,
\end{equation}
where the matrix $B$ is defined by (\ref{KB}).
\end{lem}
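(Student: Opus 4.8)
The plan is to expand $GJG^{\rT}$ directly, writing the $k$th column of $G$ from (\ref{GMY}) as $g_k = b_k + \Theta R_k X$, where $b_k := \Theta M_k^{\rT}$ is the $k$th column of the constant matrix $B$ in (\ref{KB}). Since $J$ has scalar (real) entries, one has $GJG^{\rT} = \sum_{j,k=1}^m J_{jk}\, g_j g_k^{\rT}$, with the operator ordering in each outer product $g_j g_k^{\rT}$ preserved. Substituting $g_k = b_k + \Theta R_k X$ and multiplying out produces terms graded by their degree in the plant observables: a constant term $\sum_{j,k}J_{jk} b_j b_k^{\rT} = BJB^{\rT}$; two cross terms linear in $X$; and one term quadratic in $X$ of the form $-\sum_{j,k}J_{jk}\Theta R_j\, XX^{\rT}\, R_k\Theta$, where $\Theta^{\rT}=-\Theta$ from (\ref{Theta}) has been used.

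First I would collect the two linear cross terms. Using $b_j = \Theta M_j^{\rT}$ and $b_k^{\rT} = -M_k\Theta$ and comparing with the definition (\ref{cM}), they combine into exactly $-\cM(X)$, the operator $\cM$ evaluated on the vector $X$ of observables. Next I would treat the quadratic term by splitting the operator matrix $XX^{\rT}$ through the CCRs (\ref{Theta}): writing $XX^{\rT} = S_X + \tfrac{i}{2}\Theta$, where $S_X := \tfrac12(XX^{\rT} + (XX^{\rT})^{\rT})$ has the symmetrized products $\tfrac12(X_pX_q + X_qX_p)$ as entries. The constant antisymmetric part contributes $-\tfrac{i}{2}\cR(\Theta)$ through (\ref{cR}), while the symmetric part contributes $-\cR(S_X)$. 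Assembling everything yields the single identity
\begin{equation*}
    GJG^{\rT}
    =
    BJB^{\rT}
    -\tfrac{i}{2}\cR(\Theta)
    -\cM(X)
    -\cR(S_X),
\end{equation*}
whose first two terms are constant and whose last two are, respectively, linear and genuinely quadratic in the observables.

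With this decomposition both directions of the equivalence follow. For the ``if'' direction, if $\cM$ vanishes on $\mR^n$ and $\cR$ vanishes on $\mS_n$, then $\cM(X) = \sum_\ell X_\ell\cM(e_\ell) = 0$, and $\cR(S_X)=0$ because $S_X$ is symmetric and expandable in the basis $\{E_{ab}+E_{ba}\}\subset\mS_n$ with operator coefficients that commute with the real matrices $R_j$, $\Theta$; what remains is precisely (\ref{GJGconst}). For the ``only if'' direction, I would argue that $-\cM(X)$ and $-\cR(S_X)$ lie in different graded pieces of the polynomial algebra in $X$ and so cannot cancel one another, whence constancy of $GJG^{\rT}$ forces each to vanish identically, and these in turn force $\cM\equiv 0$ on $\mR^n$ and $\cR\equiv 0$ on $\mS_n$.

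The main obstacle I anticipate is this last identification step: passing from ``$\cM(X)=0$ and $\cR(S_X)=0$ as operator-valued matrices'' to ``$\cM$ and $\cR$ vanish on $\mR^n$ and $\mS_n$''. This requires the linear independence, as operators, of the unit, the first-order observables $X_\ell$, and the symmetrized second-order monomials $\tfrac12(X_pX_q+X_qX_p)$ --- that is, it uses that the $X_\ell$ are genuine, non-scalar canonical variables whose polynomial algebra is graded under (\ref{Theta}). Once this independence is invoked, matching coefficients of $X_\ell$ gives $\cM(e_\ell)=0$ for every $\ell$, and matching coefficients of the quadratic monomials gives $\cR(E_{ab}+E_{ba})=0$ for all $a\le b$, which span $\mS_n$; note that $\cR(\Theta)$ is left unconstrained since $\Theta\in\mA_n$ is antisymmetric, consistently with its survival in (\ref{GJGconst}).
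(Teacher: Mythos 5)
Your decomposition of $GJG^{\rT}$ is exactly the one in the paper's proof: expanding $GJG^{\rT}=BJB^{\rT}-\cM(X)-\cR(XX^{\rT})$ and then splitting $XX^{\rT}$ into its symmetrized part (your $S_X$, the paper's $\mho$) plus the constant antisymmetric part $i\Theta/2$, which produces the residual term $-i\cR(\Theta)/2$ in (\ref{GJGconst}); the sufficiency direction is then handled identically. Where you genuinely diverge is the necessity direction. The paper takes quantum expectations of the identity over Gaussian initial states with an arbitrary mean vector $\alpha$ and real covariance $\Sigma\in\mS_n$ varied independently subject to $\Sigma+i\Theta/2\succcurlyeq 0$; constancy of $GJG^{\rT}$ then forces the affine function $(\alpha,\Sigma)\mapsto \cM(\alpha)+\cR(\alpha\alpha^{\rT}+\Sigma)$ to be constant on a set with nonempty interior, whence $\cM=0$ on $\mR^n$ and $\cR=0$ on $\mS_n$. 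You instead argue purely algebraically, invoking the linear independence (as operators) of the unit, the observables $X_\ell$, and the symmetrized quadratic monomials, and matching coefficients against the bases $\{e_\ell\}$ of $\mR^n$ and $\{E_{ab}+E_{ba}\}$ of $\mS_n$. Both routes are sound and both carry an implicit nondegeneracy hypothesis on the representation of the CCRs: the paper needs the existence of Gaussian states realizing all admissible $(\alpha,\Sigma)$ pairs, while you need the graded independence of low-degree monomials in the polynomial algebra of the $X_\ell$ (which you correctly flag as the crux, and which holds for genuine canonical variables but would fail if some $X_\ell$ were scalar). Your version has the minor advantage of not requiring any state-space machinery and of establishing the vanishing as an operator identity rather than in expectation; the paper's version only uses first and second moments and so sidesteps any discussion of the operator algebra. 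Your observation that $\cR(\Theta)$ is left unconstrained because $\Theta$ is antisymmetric matches the remark the authors make immediately after the lemma.
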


\begin{proof}
By combining (\ref{KB}), (\ref{GMY}) with  (\ref{cM}), (\ref{cR}), it follows that
\begin{align}
\nonumber
    GJG^{\rT}
    = &
    -\Theta (M^{\rT}+Y)J(M+Y^{\rT})\Theta\\
\nonumber
     = &
    BJB^{\rT}
    -
    \Theta (M^{\rT} J Y^{\rT} + YJM) \Theta
    -\Theta Y J Y^{\rT}\Theta\\
\nonumber
     = &
    BJB^{\rT} - \cM(X)
    -
    \cR(XX^{\rT})\\
\label{GJG}
     = &
    BJB^{\rT}
    -
    \cM(X)
    -
    \cR(\mho)
         - i\cR(\Theta)/2.
\end{align}
where the operator matrix $XX^{\rT}$ is split into a symmetric part $\mho$ and the antisymmetric part
$[X,X^{\rT}]/2 = i\Theta /2$ as
\begin{equation}
\label{XX}
    XX^{\rT} = \mho + i\Theta/2,
    \qquad
    \mho
    :=
    \big(
        XX^{\rT} + (XX^{\rT})^{\rT}
    \big)/2.
\end{equation}
Hence, if $\cM=0$ on $\mR^n$ and $\cR=0$ on $\mS_n$, then the terms $\cM(X)$ and $\cR(\mho)$ vanish in (\ref{GJG}), thus proving the sufficiency of these conditions, and the relation (\ref{GJGconst}) follows. In order to prove the necessity, suppose $GJG^{\rT}$ is a constant complex matrix. Then it coincides with its expectation over any quantum state where $X$ has finite second moments, that is,
\begin{equation}
\label{GJG1}
    GJG^{\rT}
    =
    BJB^{\rT}
     - i\cR(\Theta)/2
    -
    \cM(\alpha)
    -
    \cR(\alpha\alpha^{\rT}+\Sigma),
\end{equation}
where $\alpha$ is the mean vector and
\begin{equation}
\label{Sigma}
    \Sigma := \Re S
\end{equation}
is the real part of the quantum covariance matrix of $X$  from (\ref{alphaS}), and use is made of the relation $\bE \mho = \Sigma + \alpha\alpha^{\rT}$ which follows from (\ref{XX}). By using a Gaussian initial quantum state \cite{KRP_2010} for $X$ with an arbitrary mean vector $\alpha$ and the quantum covariance matrix $S= \Sigma+i\Theta/2$, where $\Sigma \in \mS_n$  is varied independently of $\alpha$ subject to  $S \succcurlyeq 0$, it follows from the constancy of the matrix $GJG^{\rT}$ in  (\ref{GJG1}) that the linear operators $\cM$ and $\cR$ vanish on the spaces $\mR^n$ and $\mS_n$, respectively. This establishes the necessity and completes the proof.
\end{proof}

Since $\Theta$ is antisymmetric, the matrix $\cR(\Theta)$ in (\ref{GJG}) does not have to vanish under the assumption of Lemma~\ref{lem:const} that $\cR=0$ on $\mS_n$. A sufficient condition for this assumption to hold can be obtained by using the vectorization of matrices \cite{M_1988}:
\begin{equation}
\label{RR}
    \sum_{j,k=1}^m J_{jk} (\Theta R_j)\ox (\Theta R_k) = 0.
\end{equation}
More precisely, the condition (\ref{RR}) is necessary and sufficient for the operator $\cR$ in (\ref{cR}) to vanish on the space $\mR^{n\x n}$ which contains $\mS_n$. For what follows, we define, in a similar fashion to (\ref{cM}), (\ref{cR}), linear operators $\cE$ and $\cV$ which map an $n$-dimensional vector $u$ and a matrix $T$ of order $n$ to two matrices of order $n$:
\begin{align}
\label{cE}
    \cE(u)
    & :=
    \Theta
    \sum_{j,k=1}^m
    v_{jk}
    (M_j^{\rT} u^{\rT} R_k + R_j u M_k )
    \Theta,\\
\label{cV}
    \cV(T)
     & :=
    \Theta
    \sum_{j,k=1}^m
    v_{jk}
    R_j T R_k
    \Theta,
\end{align}
where $v_{jk}$ are the entries of the real part of the quantum Ito  matrix from (\ref{V}).

\begin{thm}
\label{th:CCR}
The quasilinear QSDE (\ref{qlin}) preserves the CCR matrix $\Theta$ of the plant observables from (\ref{Theta}) if and only if the linear operators $\cM$ and $\cR$, defined by (\ref{cM}), (\ref{cR}), vanish on $\mR^n$ and $\mS_n$, respectively, and
\begin{equation}
\label{PR}
        A\Theta + \Theta A^{\rT}
        + BJB^{\rT}
        =
        \cV(\Theta).
\end{equation}
\end{thm}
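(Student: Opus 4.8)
The plan is to compute the quantum Itô differential of the matrix $[X,X^{\rT}]$ of commutators and to require it to vanish, since the CCR matrix $\Theta$ is preserved exactly when $[X,X^{\rT}]\equiv i\Theta$ acquires neither a diffusion nor a drift term under the flow of (\ref{qlin}). Writing $[X,X^{\rT}]=XX^{\rT}-(XX^{\rT})^{\rT}$ and applying the quantum Itô product rule with $\rd X=(AX+\beta)\rd t+G\rd W$ (drift $F=AX+\beta$ of (\ref{Faff}), dispersion $G=\Theta(M^{\rT}+Y)$ of (\ref{GMY})), I would use $\rd W\rd W^{\rT}=\Omega\rd t$ together with the commutativity of the forward increment $\rd W$ with the adapted operators. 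The coefficient of $\rd W_s$ in $\rd[X_j,X_k]$ comes out as $[g_{js},X_k]+[X_j,g_{ks}]$, and the only Itô correction to the drift is $G\Omega G^{\rT}-(G\Omega G^{\rT})^{\rT}$.

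The first step is to dispose of the diffusion. Since $G=\Theta(M^{\rT}+Y)$ with $Y_s=R_sX$ and each $R_s\in\mS_n$, the CCRs (\ref{Theta}) give $[g_{js},X_k]=i(\Theta R_s\Theta)_{jk}$ and $[X_j,g_{ks}]=-i(\Theta R_s\Theta)_{jk}$, the opposite sign stemming from $\Theta^{\rT}=-\Theta$ and $R_s^{\rT}=R_s$, so that their sum vanishes for every $s$. Thus $[X,X^{\rT}]$ carries no martingale part whatever the coefficients, and CCR preservation reduces to the requirement that the drift of $[X,X^{\rT}]$ vanish as an operator identity.

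Next I would evaluate that drift. The contribution of $F$ is $i(A\Theta+\Theta A^{\rT})$, from $[F_j,X_k]=i(A\Theta)_{jk}$ and $[X_j,F_k]=i(\Theta A^{\rT})_{jk}$. For the Itô correction I would split $\Omega=V+iJ/2$ and note that $GVG^{\rT}$ admits exactly the decomposition that (\ref{GJG}) gives for $GJG^{\rT}$, with $J$ replaced by $V$ and $\cM,\cR$ replaced by the operators $\cE,\cV$ of (\ref{cE}), (\ref{cV}):
\begin{equation*}
GVG^{\rT}=BVB^{\rT}-\cE(X)-\cV(\mho)-\tfrac{i}{2}\cV(\Theta),
\qquad
GJG^{\rT}=BJB^{\rT}-\cM(X)-\cR(\mho)-\tfrac{i}{2}\cR(\Theta),
\end{equation*}
with $\mho$ the symmetric part of $XX^{\rT}$ from (\ref{XX}). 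The decisive bookkeeping is the behaviour under transposition: because $J$ is antisymmetric and the $R_s$ are symmetric, both $\cM(X)$ and $\cR(\mho)$ are antisymmetric while $\cR(\Theta)$ is symmetric, whereas because $V$ is symmetric, $\cE(X)$ and $\cV(\mho)$ are symmetric while $\cV(\Theta)$ is antisymmetric. Antisymmetrizing (forming $N-N^{\rT}$) therefore collapses the $V$-block to $-i\cV(\Theta)$ and the $J$-block to $2BJB^{\rT}-2\cM(X)-2\cR(\mho)$, so that
\begin{equation*}
G\Omega G^{\rT}-(G\Omega G^{\rT})^{\rT}=-i\cV(\Theta)+i\big(BJB^{\rT}-\cM(X)-\cR(\mho)\big),
\end{equation*}
and the drift of $[X,X^{\rT}]$ equals $i\big(A\Theta+\Theta A^{\rT}+BJB^{\rT}-\cV(\Theta)-\cM(X)-\cR(\mho)\big)$.

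Finally I would read off both directions. Vanishing of the drift is equivalent to $\cM(X)+\cR(\mho)=A\Theta+\Theta A^{\rT}+BJB^{\rT}-\cV(\Theta)$, whose right-hand side is a constant matrix; by (\ref{GJG}) this is the same as saying $GJG^{\rT}$ is constant, so Lemma~\ref{lem:const} forces $\cM=0$ on $\mR^n$ and $\cR=0$ on $\mS_n$, after which the residual constant balance is precisely (\ref{PR}). Conversely, if $\cM$ and $\cR$ vanish and (\ref{PR}) holds, the operator terms $\cM(X)$ and $\cR(\mho)$ disappear and the drift is identically zero, so $\Theta$ is preserved. I expect the main obstacle to be the sign-and-symmetry accounting in the antisymmetrization --- making sure the $V$-induced operators cancel while the $J$-induced ones survive and reproduce exactly the $\cM,\cR$ of Lemma~\ref{lem:const} --- together with the logical point of converting ``$\cM(X)+\cR(\mho)$ is a constant operator matrix'' into the coefficient identities $\cM=0$, $\cR=0$, which is supplied by Lemma~\ref{lem:const} through its variation over Gaussian initial states.
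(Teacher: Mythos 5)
Your proposal is correct and follows essentially the same route as the paper's proof: it computes the quantum It\^{o} differential of $[X,X^{\rT}]$, shows the martingale part vanishes, reduces the vanishing of the drift to the constancy of $GJG^{\rT}$, and invokes Lemma~\ref{lem:const} to force $\cM=0$ on $\mR^n$ and $\cR=0$ on $\mS_n$ before reading off (\ref{PR}) from the residual constant balance. The only (inessential) difference is that you obtain the It\^{o} correction by antisymmetrizing $G\Omega G^{\rT}$ using the transposition symmetries of $\cM,\cR,\cE,\cV$, whereas the paper expands $[G\rd W,(G\rd W)^{\rT}]$ via a commutator identity into $iGJG^{\rT}$ plus a constant matrix; both computations land on the identical drift $i\big(A\Theta+\Theta A^{\rT}+BJB^{\rT}-\cV(\Theta)-\cM(X)-\cR(\mho)\big)$.
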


\begin{proof}
By combining the quantum Ito formula with the bilinearity of the commutator as $\rd [ X, X^{\rT}] = [\rd X, X^{\rT}] + [X, \rd X^{\rT}]+[\rd X, \rd X^{\rT}]$, and using  (\ref{XQSDEgen}) together with the quantum Ito product rules, it follows that
\begin{align}
\nonumber
    \rd [X, X^{\rT}&]
=
    [
        F\rd t + G \rd W,
        X^{\rT}
    ] +
        [
            X,
        (F\rd t + G \rd W)^{\rT}
    ]\\
\nonumber
     &+
    [G \rd W, (G \rd W)^{\rT}]\\
\nonumber
    =&
    \Big(
        [F,X^{\rT}] + [X,F^{\rT}] + iGJG^{\rT} + \sum_{j,k=1}^m \overline{\omega_{jk}} [g_j, g_k^{\rT}]
    \Big)
    \rd t\\
\nonumber
     &+
    \sum_{k=1}^{m}
    \big(
        [g_k,X^{\rT}]
        +
        [X, g_k^{\rT}]
    \big)
    \rd W_k\\
\label{dXX}
    = &
    i
    \Big(
        A\Theta + \Theta A^{\rT}
        + GJG^{\rT} - \Theta  \sum_{j,k=1}^m \overline{\omega_{jk}} R_j \Theta R_k \Theta
    \Big)    \rd t.\!\!\!\!\!\!
\end{align}
Here, use has also been made of the commutativity between $\rd W$ and the adapted processes $F$, $G$, $X$ and the relationship
\begin{align}
\nonumber
    [G\rd W, (G\rd W)^{\rT}]
      =&
    \sum_{j,k=1}^m
    [g_j\rd W_j, g_k^{\rT} \rd W_k]\\
\nonumber
      = &
    \!\!\!\sum_{j,k=1}^m\!\!\!
    \Big(
        [g_j, g_k^{\rT}] \omega_{jk}
        \!+\! i g_j g_k^{\rT}J_{jk} -i[g_j, g_k^{\rT}]J_{jk}
        \Big)\rd t\\
\label{GGGG}
    =&
    i
    \Big(
        GJG^{\rT}
        -
        \Theta
        \sum_{j,k=1}^m
        \overline{\omega_{jk}}
        R_j \Theta R_k \Theta
    \Big)\rd t,
\end{align}
where $J_{jk}=2\Im\omega_{jk}$ is the $(j,k)$th entry of the matrix $J$ from (\ref{CCRW}), so that $\omega_{jk}-iJ_{jk} = \overline{\omega_{jk}}$.  In turn, (\ref{GGGG}) follows from the commutator identity
$
    [\varphi\psi,\sigma \tau] = [\varphi,\sigma]\psi \tau + \varphi[\psi ,\sigma]\tau
    +\sigma[\varphi,\tau]\psi  + \sigma\varphi[\psi ,\tau]
$
which reduces to
\begin{align}
\nonumber
    [\varphi \psi,\sigma\tau]
    & =
    [\varphi,\sigma]\psi\tau + \sigma\varphi[\psi,\tau]\\
\label{ABCD}
    & =
    [\varphi,\sigma]\psi\tau + \varphi\sigma [\psi,\tau] - [\varphi,\sigma][\psi,\tau],
\end{align}
provided $[\psi,\sigma] = 0$ and $[\varphi,\tau] = 0$. More precisely,
(\ref{ABCD}) is applied to the case where  $\varphi$, $\sigma$ are entries of the dispersion matrix $G$, while $\psi$, $\tau$ are those of $\rd W$. Note that the following term in (\ref{dXX})
\begin{equation}
\label{Jacobi}
    [g_k,X^{\rT}] +[X, g_k^{\rT}]
    = 0
\end{equation}
vanishes regardless of the particular form (\ref{gkY}) of the columns of the dispersion matrix $G$ in the case of quadratic plant-field coupling (\ref{hMF}). In fact, (\ref{Jacobi}) follows from the general definition (\ref{gk}) and the Jacobi identity \cite{M_1998} combined with the CCRs (\ref{Theta}) whereby $[X,[h_k,X^{\rT}]] + [[h_k,X], X^{\rT}] = [h_k, [X,X^{\rT}]] = i[h_k, \Theta]= 0$. Also, in (\ref{GGGG}), we have used the relation $[g_j, g_k^{\rT}] = -\Theta R_j [X,X^{\rT}] R_k \Theta=-i\Theta R_j \Theta R_k \Theta$ which follows from (\ref{gkY}). Now, the CCRs (\ref{Theta}) are preserved in time, that is,  the left-hand side of the QSDE (\ref{dXX}) vanishes identically, if and only if so does the right-hand side:
\begin{equation}
\label{AAGJG}
        A\Theta + \Theta A^{\rT}
        + GJG^{\rT} - \Theta  \sum_{j,k=1}^m \overline{\omega_{jk}} R_j \Theta R_k \Theta = 0.
\end{equation}
The fulfillment of (\ref{AAGJG}) is only possible if $GJG^{\rT}$ is a constant complex matrix. By Lemma~\ref{lem:const}, this property is equivalent to that $\cM =0$ on $\mR^n$  and $\cR=0$ on $\mS_n$, in which case $GJG^{\rT}$ is given by (\ref{GJGconst}). Now, since
$
    \Theta
    \sum_{j,k=1}^m
    \overline{\omega_{jk}}
    R_j \Theta R_k
    \Theta
    =
    \cV(\Theta)
    -
    i\cR(\Theta)    /2
$
in view of $\overline{\omega_{jk}} = v_{jk}-iJ_{jk}/2$ and (\ref{cR}), (\ref{cV}),
then substitution of (\ref{GJGconst}) into (\ref{AAGJG}) yields
$
    A\Theta + \Theta A^{\rT}
    +BJB^{\rT}
    -
    \cV(\Theta)=0,
$
which is equivalent to (\ref{PR}), and the proof is complete. \end{proof}

Therefore, Theorem~\ref{th:CCR} imposes constraints (in terms of the operators $\cM$ and $\cR$) which the quadratic plant-field coupling operators (\ref{hMF}) have to satisfy in order to make an affine drift term of the QSDE (\ref{qlin}) achievable through an appropriate choice of the plant Hamiltonian $H$.

\begin{lem}
\label{lem:Lquad}
Suppose the CCR matrix $\Theta$ in (\ref{Theta}) is nonsingular, and the operators $\cM$ and $\cR$ in (\ref{cM}), (\ref{cR}), associated  with the quadratic plant-field coupling model (\ref{hMF}), satisfy the conditions of Theorem~\ref{th:CCR}. Then the GKSL vector $\cL(X)$ in (\ref{DDR}) is  a \emph{quadratic} polynomial of the plant observables with the leading term $\Theta YJM X/4$, that is,
\begin{equation}
\label{Lquad}
    \cL(X)
    =
    \Theta YJM X/4 + ({\rm affine\ function\ of}\ X).
\end{equation}
\end{lem}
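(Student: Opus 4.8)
The plan is to read $\cL(X)$ off the formula (\ref{DDR}) of Lemma~\ref{lem:DDR}, namely $\cL(X)=\frac{1}{2}\big(GJh+i\sum_{j,k=1}^m\omega_{jk}[h_j,g_k]\big)$, and to sort its terms by polynomial degree in $X$. The commutator sum needs no work: by (\ref{lincontr}) it equals $\sum_{j,k=1}^m\omega_{jk}\Theta R_k\Theta(M_j^{\rT}+R_jX)$, which is affine in $X$ and is absorbed into the remainder. All the analysis therefore concentrates on $GJh$. Substituting $G=\Theta(M^{\rT}+Y)$ from (\ref{GMY}) and $h=(M+Y^{\rT}/2)X$ from (\ref{hMY}), I would expand $GJh$ into the four pieces $\Theta M^{\rT}JMX$, $\Theta M^{\rT}JY^{\rT}X/2$, $\Theta YJMX$ and $\Theta YJY^{\rT}X/2$. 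Since the entries of $Y=[R_1X,\ldots,R_mX]$ are linear in $X$, the first piece is affine, the middle two are quadratic, and the last is a priori cubic.

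To remove the cubic piece I would note, as in the proof of Lemma~\ref{lem:const}, that $\Theta YJY^{\rT}\Theta=\cR(XX^{\rT})$ by the definition (\ref{cR}) with $Y_j=R_jX$. Splitting $XX^{\rT}=\mho+i\Theta/2$ as in (\ref{XX}) and invoking the hypothesis $\cR=0$ on $\mS_n$, which annihilates $\cR(\mho)$ once the operator entries of the symmetric matrix $\mho$ are pulled through the constant factors $\Theta$, $J$, $R_j$, $R_k$, leaves $\Theta YJY^{\rT}\Theta=i\cR(\Theta)/2$, a constant matrix. Because $\Theta$ is nonsingular, $\Theta YJY^{\rT}=\frac{i}{2}\cR(\Theta)\Theta^{-1}$ is constant as well, so $\Theta YJY^{\rT}X/2$ is in fact linear in $X$ and merges into the affine remainder.

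Finally, for the two genuinely quadratic pieces I would use the hypothesis $\cM=0$ on $\mR^n$. Evaluating the linear operator (\ref{cM}) at $X$ gives $\cM(X)=\Theta(M^{\rT}JY^{\rT}+YJM)\Theta=\sum_{\ell}X_\ell\cM(e_\ell)=0$, and nonsingularity of $\Theta$ then yields the operator-matrix identity $M^{\rT}JY^{\rT}=-YJM$. Right-multiplying by $X$ gives $\Theta M^{\rT}JY^{\rT}X=-\Theta YJMX$, so the two quadratic pieces collapse to $-\Theta YJMX/2+\Theta YJMX=\Theta YJMX/2$. The overall factor $1/2$ then produces the stated leading term $\Theta YJMX/4$ and confirms (\ref{Lquad}).

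The main obstacle will be the careful bookkeeping of operator non-commutativity: I must justify that the scalar-matrix operators $\cM$ and $\cR$ may legitimately be evaluated at the operator-valued arguments $X$, $\mho$ and $XX^{\rT}$ by commuting the operator entries past the real constant matrices $\Theta$, $J$, $R_j$, $M$, and that the resulting matrix identities survive right-multiplication by the operator vector $X$. These are precisely the manipulations already validated in the proof of Lemma~\ref{lem:const}, so the step is delicate but calls for no new idea.
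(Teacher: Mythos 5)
Your proof is correct and follows essentially the same route as the paper's: both start from (\ref{DDR}), dispose of the commutator sum via (\ref{lincontr}), and reduce the problem to showing that the $Y$-dependent high-degree contributions to $GJh$ are annihilated by the hypotheses $\cM=0$, $\cR=0$ together with $\det\Theta\ne 0$. The only difference is organizational: the paper substitutes $h=(M\Theta-G^{\rT})\Theta^{-1}X/2$ so that the cubic and unwanted quadratic terms are packaged into the single constant matrix $GJG^{\rT}$ via Lemma~\ref{lem:const}, whereas you expand $GJh$ into four pieces, kill the cubic one through $\cR(\mho)=0$, and fold the cross term using $M^{\rT}JY^{\rT}=-YJM$ from $\cM(X)=0$ --- both computations give the same leading term $\Theta YJMX/2$ for $GJh$ and hence $\Theta YJMX/4$ for $\cL(X)$.
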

\begin{proof} Since $\det\Theta \ne 0$, the  relation
$
    h = (M\Theta - G^{\rT})\Theta^{-1}X/2
$
between the vector $h$ of quadratic plant-field coupling  operators (\ref{hMY}) and the corresponding dispersion matrix $G$ in (\ref{GMY}) implies that
\begin{align}
\nonumber
    GJh
    & =
    GJ(M\Theta - G^{\rT})\Theta^{-1}X/2\\
\label{GJh}
    & =
    (\Theta M^{\rT}JM - GJG^{\rT}\Theta^{-1} + \Theta YJM)X/2,
\end{align}
where use is also made of
$
    GJM
 =
    \Theta (M^{\rT} + Y) JM
$. Now, if the operators $\cM$ and $\cR$ vanish on $\mR^n$ and $\mS_n$, then, by Lemma~\ref{lem:const}, the matrix $GJG^{\rT}$ is constant. Note that for an arbitrary quadratic plant-field coupling model without PR constraints,  $GJG^{\rT}$ would be quadratic and $GJh$ in (\ref{GJh}) would be a cubic polynomial.  Therefore, the constancy of $GJG^{\rT}$ reduces $GJh$ to a quadratic polynomial of $X$, with its leading (quadratic) term being  $\Theta YJM X/2$ in view of the linear dependence of $Y$ on $X$. It now remains to substitute (\ref{GJh}) and (\ref{lincontr}) into (\ref{DDR})  in order to verify that $\cL(X)$ is a quadratic polynomial of $X$  with the leading term $\Theta YJM X/4$ as in (\ref{Lquad}), where the calculation of the affine part is omitted for the sake of brevity. \end{proof}

Lemma~\ref{lem:Lquad} suggests that the class of candidate plant polynomials $H$ for counterbalancing the nonlinear terms of the GKSL operator $\cL(X)$ by $i[H,X]$ (to achieve an affine drift vector in the governing QSDE) can be reduced to \emph{cubic}  polynomials. One of such Hamiltonians is found in the next section.




\section{CUBIC PLANT HAMILTONIAN}\label{sec:ham3}

The following theorem provides a characterization of the class of quasilinear quantum stochastic plants described by the QSDE (\ref{qlin}) associated with the quadratic plant-field coupling model (\ref{hMF}). For its formulation, we introduce a Hamiltonian
\begin{equation}
\label{Hcub}
    H
    :=
    \gamma^{\rT} X
    +
    X^{\rT} R_0 X/2
    -
    X^{\rT} YJMX/12,
\end{equation}
which is a cubic polynomial of the plant observables. Here, $\gamma \in \mR^n$ and $R_0 \in \mS_n$ are arbitrary vector and matrix which specify the quadratic part of $H$, while
\begin{equation}
\label{Delta}
    \Delta
     :=
    X^{\rT} YJMX
    =
        \begin{bmatrix}
        X^{\rT} R_1 X & \ldots & X^{\rT} R_m X
    \end{bmatrix}
    JM X
\end{equation}
is a homogeneous cubic polynomial of $X$, specified by the parameters $M\in \mR^{m\x n}$ and $R_1, \ldots, R_m\in \mS_n$ of (\ref{hMF})--(\ref{hMY}).

\begin{thm}
\label{th:cub}
Suppose the observables of the open quantum plant under consideration have a nonsingular CCR matrix $\Theta $ in (\ref{Theta}), and the plant-field coupling is described by the quadratic model (\ref{hMF}) whose parameters satisfy the conditions of Theorem~\ref{th:CCR}. Then the cubic plant Hamiltonian $H$, described by (\ref{Hcub}), is self-adjoint and leads to a quasilinear QSDE (\ref{qlin}).
\end{thm}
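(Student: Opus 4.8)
The plan is to decompose the cubic Hamiltonian (\ref{Hcub}) as $H = \gamma^{\rT}X + X^{\rT}R_0 X/2 - \Delta/12$, where $\Delta = X^{\rT}YJMX$ is the homogeneous cubic in (\ref{Delta}), and to treat the two assertions separately. Self-adjointness of the linear term $\gamma^{\rT}X$ and of the quadratic term $X^{\rT}R_0X/2$ is immediate from $\gamma \in \mR^n$, $R_0 \in \mS_n$ and the self-adjointness of $X_1, \dots, X_n$, so the self-adjointness of $H$ reduces to that of $\Delta$. For the second assertion I would use $F = i[H,X] + \cL(X)$ from (\ref{FG}): by Lemma~\ref{lem:Lquad} the conditions of Theorem~\ref{th:CCR} force $\cL(X)$ to be quadratic with leading term $\Theta YJMX/4$, and the purpose of the cubic correction $-\Delta/12$ is precisely to make $i[-\Delta/12, X]$ cancel this quadratic term, leaving an affine drift $F = AX + \beta$ and hence the QSDE (\ref{qlin}).

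For self-adjointness of $\Delta$, I would write $\Delta = \sum_{j,k=1}^m J_{jk}(X^{\rT}R_jX)(M_kX)$ as in (\ref{Delta}); since $X^{\rT}R_jX$ and $M_kX$ are self-adjoint, $\Delta - \Delta^{\dagger} = \sum_{j,k} J_{jk}[X^{\rT}R_jX,\, M_kX]$. Using $i[X^{\rT}R_jX, X] = 2\Theta R_j X$ (obtained exactly as in (\ref{HXquad})) together with the CCRs (\ref{Theta}), this collapses to $\Delta - \Delta^{\dagger} = -2i\big(\sum_{j,k} J_{jk} M_k\Theta R_j\big)X$, so it suffices to show that the constant row $\sum_{j,k} J_{jk} M_k\Theta R_j$ vanishes. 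This is where $\cM = 0$ enters: the vanishing of $\cM$ in (\ref{cM}) is equivalent, since $\det\Theta \ne 0$, to the matrix $\sum_{j,k} J_{jk} M_j^{\rT} u^{\rT} R_k$ being symmetric for every $u \in \mR^n$, and specialising $u$ to the standard basis vectors yields $\sum_{j,k} J_{jk} M_{ka}(R_j)_{cb} = \sum_{j,k} J_{jk} M_{kc}(R_j)_{ab}$. Contracting this identity against $\Theta_{ac}$ and summing, then comparing with the transpose of $\sum_{j,k} J_{jk} M_k\Theta R_j$ (recalling $\Theta^{\rT} = -\Theta$ and $R_j^{\rT} = R_j$), forces that row to equal its own negative, hence to vanish. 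Thus $\Delta^{\dagger} = \Delta$ and $H$ is self-adjoint.

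For the affine drift, I would compute $i[\Delta, X]$ by the same commutator-reduction rule, obtaining the two homogeneous-quadratic pieces $i[\Delta, X_\ell] = 2\sum_{j,k} J_{jk}(\Theta R_j X)_\ell(M_k X) - \sum_{j,k} J_{jk}(M_k\Theta)_\ell(X^{\rT}R_jX)$. Comparing with the $\ell$-th component $\sum_{j,k} J_{jk}(\Theta R_jX)_\ell(M_kX)$ of $\Theta YJMX$, the claim $i[\Delta, X] = 3\Theta YJMX$ (modulo an affine remainder coming from reordering noncommuting factors) reduces to the single bilinear identity $\sum_{j,k} J_{jk}(M_k\Theta)_\ell(X^{\rT}R_jX) = -\sum_{j,k} J_{jk}(\Theta R_jX)_\ell(M_kX)$. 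Equating symmetric coefficient matrices, this again follows from the symmetry identity above, applied once to each of the two symmetrised index pairs, so that $i[-\Delta/12, X] = -\Theta YJMX/4$ cancels exactly the quadratic leading term of $\cL(X)$ supplied by Lemma~\ref{lem:Lquad}. The surviving contributions $\Theta\gamma$, $\Theta R_0 X$, the affine part of $\cL(X)$, and the affine remainder of $i[\Delta,X]/12$ then assemble into $F = AX + \beta$, which together with the dispersion $G = \Theta(M^{\rT}+Y)$ from (\ref{GMY}) is precisely (\ref{qlin}).

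I expect the main obstacle to be the bookkeeping in this last step: verifying that the two quadratic pieces of $i[\Delta, X]$ combine, through the $\cM = 0$ symmetry, into exactly $3\Theta YJMX$, and that the coefficient $-1/12$ in (\ref{Hcub}) (the reciprocal of the factor $3$ produced by differentiating a cubic, matched against the $1/4$ of Lemma~\ref{lem:Lquad}) is the unique choice making the cancellation work. One must also track the scalar discrepancies generated whenever noncommuting factors are reordered, but these are harmless as they only feed into $\beta$; the genuinely nonlinear quadratic terms are the only ones that must cancel, and they do so exactly under the hypotheses of Theorem~\ref{th:CCR}.
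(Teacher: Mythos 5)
Your proposal is correct and follows essentially the same route as the paper: reduce everything to the homogeneous cubic $\Delta$, derive its self-adjointness from the condition $\cM=0$ (equivalently, the symmetry of the coefficient matrices of $YJM$), establish $i[\Delta,X]=3\Theta YJMX$, and cancel the leading term $\Theta YJMX/4$ of $\cL(X)$ from Lemma~\ref{lem:Lquad} via the factor $-1/12$. The only difference is tactical: you verify the key identities componentwise (contracting the totally symmetric three-index tensor against the antisymmetric $\Theta$), whereas the paper works in matrix form with $Z:=Y^{\rT}X$ and the symmetric matrices $\Phi_k$, obtaining the exact identity $i[\Delta,X]=3\Theta YJMX$ with no affine remainder to track.
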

\begin{proof}
Since the quadratic part of $H$, which was discussed in Section~\ref{sec:lincoup}, is  a self-adjoint operator which contributes an affine function of $X$ to the drift vector $F$ in (\ref{FG}), we will consider the cubic part of (\ref{Hcub}). In order to verify that $\Delta$ in (\ref{Delta}) is indeed a self-adjoint operator,  note that the definition of $Y$ in (\ref{hMF})--(\ref{hMY}) implies that the entries of the matrix $YJM$ are linear combinations of the plant observables
\begin{equation}
\label{YJM}
    YJM
    =
    \sum_{k=1}^n
    \Phi_k X_k,
\end{equation}
whose coefficients comprise matrices $\Phi_1, \ldots, \Phi_n\in \mR^{n\x n}$ as
\begin{equation}
\label{Phi}
    \Phi_k
    =
    \sum_{p,s=1}^m
    J_{ps}
    (R_p)_{\bullet k} M_s.
\end{equation}
Here, $(R_p)_{\bullet k}$ denotes the $k$th column of $R_p$, and $M_s$ is the $s$th row of $M$ as before. From the definition (\ref{cM}) of the operator $\cM$, it follows that, if $\det \Theta \ne 0$, then the condition of Theorem~\ref{th:CCR} that $\cM$ vanishes on $\mR^n$   is equivalent to the symmetry of all the matrices $\Phi_k$ in (\ref{Phi}). Hence, substitution  of (\ref{YJM}) into (\ref{Delta}) yields
$
    \Delta^{\dagger}
    =
    \big(
    \sum_{j,k,\ell=1}^n
    (\Phi_k)_{j\ell} X_jX_kX_{\ell}
    \big)^{\dagger}
    =
    \sum_{j,k,\ell=1}^n
    (\Phi_k)_{j\ell} X_{\ell}X_kX_j
    =\Delta
$,
where $(\Phi_k)_{j\ell}=(\Phi_k)_{\ell j}$ denotes  the $(j,\ell)$th entry of $\Phi_k\in \mS_n$. We will now compute the contribution of the cubic term $\Delta$ of $H$ from (\ref{Hcub}) to $i[H,X]$.  To this end,  by representing the operator $\Delta $ in (\ref{Delta}) as
\begin{equation}
\label{DZ}
    \Delta = Z^{\rT} JM X,
    \qquad
    Z:= Y^{\rT} X=(X^{\rT} R_k X)_{1\< k \< m},
\end{equation}
where the $m$-dimensional vector $Z$ consists of self-adjoint operators, it follows that
\begin{align}
\nonumber
    i[\Delta, X]
    & =
    i(Z^{\rT} JM [X,X^{\rT}] )^{\rT}
    -
    i[X,Z^{\rT}] JMX\\
\nonumber
    & =
    -(Z^{\rT} JM \Theta )^{\rT}
    +
    2\Theta Y JMX\\
\label{iD}
    & =
    -
    \Theta M^{\rT} J Y^{\rT}X
    +
    2\Theta Y JMX = 3\Theta YJM X.
\end{align}
Here, use is made of the CCRs (\ref{Theta}), the relation $-i[X,Z^{\rT}] = 2\Theta Y$ is obtained from (\ref{DZ}) by regarding the entries of $Z$ as quadratic Hamiltonians, and the symmetry of the matrix $YJM = -M^{\rT}  J Y^{\rT}$ follows from (\ref{YJM}). Now, (\ref{Hcub}) and (\ref{iD}) imply that the contribution of $\Delta$ to $i[H,X]$ is described by
$
    -i[\Delta,X]/12
    =
    -\Theta YJM X/4
$,
which is the negative of the leading quadratic term of the GKSL operator $\cL(X)$ computed in (\ref{Lquad}) of Lemma~\ref{lem:Lquad}. Therefore, $-i[\Delta,X]/12 + \cL(X)$ is an affine function of $X$ and so is the drift operator $F$ in (\ref{FG}) which corresponds to the cubic Hamiltonian (\ref{Hcub}). \end{proof}

Thus, Theorem~\ref{th:cub} constructs a physically realizable quasilinear quantum stochastic plant from an appropriately constrained quadratic plant-field coupling model and the corresponding cubic plant Hamiltonian with an arbitrary quadratic part. The matrix $A$ in (\ref{qlin}), whose calculation is omitted for the sake of brevity, can be recovered from the proofs of Lemma~\ref{lem:Lquad} and Theorem~\ref{th:cub}.

\section{QUADRATIC STOCHASTIC STABILITY}\label{sec:quadlyap}

Due to the algebraic closedness in the moment dynamics (\ref{momclos}), which extends from the linear case of Section~\ref{sec:lincoup} to the quasilinear quantum systems (\ref{qlin}), the stochastic   stability  of such systems is amenable to analysis at the level of moments of arbitrarily high  order. We will discuss the quadratic stability which is concerned with the first two moments.

\begin{thm}
\label{th:12}
For the quasilinear quantum stochastic plant governed by the QSDE (\ref{qlin}) and satisfying the CCR preservation conditions of Theorem~\ref{th:CCR}, the mean vector $\alpha$ from (\ref{alphaS}), and the real part $\Sigma$ of the quantum covariance matrix in (\ref{Sigma}) satisfy the ODEs
\begin{align}
\label{alphadot}
    \dot{\alpha}
     & =
    A\alpha + \beta,\\
\nonumber
    \dot{\Sigma}
     & =
    A\Sigma
    +
    \Sigma A^{\rT}
    -
    \cV(\Sigma)+
    BVB^{\rT}
    +
    \cR(\Theta)/4\\
\label{Sigmadot}
    &
    \quad -
    \cE(\alpha)
    -
    \cV(\alpha\alpha^{\rT}),
\end{align}
where the linear operators $\cE$ and $\cV$ are defined by (\ref{cE}), (\ref{cV}).
\end{thm}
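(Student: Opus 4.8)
The plan is to apply the quantum Ito formula to the first- and second-order monomials of $X$ under the affine-drift QSDE (\ref{qlin}), average over the plant-field state (\ref{rho0}), and take real parts. First I would establish (\ref{alphadot}): averaging (\ref{qlin}) and using that the forward increment $\rd W$ commutes with the adapted processes and has zero mean over the vacuum field state $\ups$, the diffusion term $\Theta(M^{\rT}+Y)\rd W$ drops out in expectation, so that $\dot\alpha = \bE F = A\alpha + \beta$ by the affine drift structure (\ref{Faff}) furnished by Theorem~\ref{th:cub}.

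For (\ref{Sigmadot}) I would first track the matrix of second moments $P := \bE(XX^{\rT})$. The quantum Ito rule gives
$$\rd(XX^{\rT}) = (\rd X)X^{\rT} + X(\rd X)^{\rT} + (\rd X)(\rd X)^{\rT},$$
where the Ito correction reduces, via $\rd W \rd W^{\rT} = \Omega\rd t$ from (\ref{Omega}) and the commutativity of $\rd W$ with the adapted dispersion $G$, to $(\rd X)(\rd X)^{\rT} = G\Omega G^{\rT}\rd t$. Averaging kills the martingale parts and yields $\dot P = AP + PA^{\rT} + \beta\alpha^{\rT} + \alpha\beta^{\rT} + \bE(G\Omega G^{\rT})$; subtracting $\frac{\rd}{\rd t}(\alpha\alpha^{\rT})$ computed from (\ref{alphadot}) cancels the $\beta$-terms and leaves $\dot S = AS + SA^{\rT} + \bE(G\Omega G^{\rT})$ for $S = \cov(X)$.

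The core step is $\bE(G\Omega G^{\rT})$. Splitting $\Omega = V + iJ/2$ per (\ref{V}), (\ref{CCRW}), the $J$-part is settled by Lemma~\ref{lem:const}: $iGJG^{\rT}/2 = iBJB^{\rT}/2 + \cR(\Theta)/4$ is a constant matrix by (\ref{GJGconst}). For the $V$-part I would expand $GVG^{\rT}$ using $G = \Theta(M^{\rT}+Y)$ from (\ref{GMY}), the relations $G^{\rT} = -(M+Y^{\rT})\Theta$ and $B = \Theta M^{\rT}$ from (\ref{KB}); the four resulting blocks recombine into $BVB^{\rT} - \cE(X) - \cV(XX^{\rT})$ by matching the cross- and quadratic terms against the definitions (\ref{cE}), (\ref{cV}). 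Averaging and substituting $\bE(XX^{\rT}) = \Sigma + \alpha\alpha^{\rT} + i\Theta/2$, which follows from (\ref{XX}) and $\bE\mho = \Sigma + \alpha\alpha^{\rT}$, then expresses $\bE(G\Omega G^{\rT})$ through $\Sigma$, $\alpha$ and constant matrices.

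Finally I would take the real part of $\dot S$, using $S = \Sigma + i\Theta/2$ and the reality of $A$, $\alpha$ and of the operators $\cE$, $\cV$, $\cR$ on real arguments; the imaginary contributions $-i\cV(\Theta)/2$ and $iBJB^{\rT}/2$ drop out, leaving exactly (\ref{Sigmadot}). I expect the main obstacle to be the bookkeeping in expanding $GVG^{\rT}$ — keeping the antisymmetry of $\Theta$ straight and cleanly isolating the symmetric/real part ($\Sigma + \alpha\alpha^{\rT}$) from the antisymmetric/imaginary part ($i\Theta/2$) of $\bE(XX^{\rT})$, so that the correct signs land on $\cE(\alpha)$, $\cV(\Sigma)$ and $\cV(\alpha\alpha^{\rT})$. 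As a consistency check, the discarded imaginary part of $\dot S$ should vanish identically, which is precisely the CCR-preservation identity (\ref{PR}) of Theorem~\ref{th:CCR}.
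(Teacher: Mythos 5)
Your proposal is correct and follows essentially the same route as the paper: average the quantum Ito differential of $XX^{\rT}$, split $\Omega = V + iJ/2$, handle the $J$-part via Lemma~\ref{lem:const} and the $V$-part by expanding $GVG^{\rT}$ into $BVB^{\rT}-\cE(X)-\cV(XX^{\rT})$, then substitute $\bE(XX^{\rT})=\Sigma+\alpha\alpha^{\rT}+i\Theta/2$ and take real parts. The only (immaterial) difference is that you subtract $\rd(\alpha\alpha^{\rT})/\rd t$ before taking real parts whereas the paper does it afterwards, and your closing observation that the discarded imaginary part reproduces the CCR-preservation identity (\ref{PR}) is a valid consistency check.
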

\begin{proof}
Both (\ref{alphadot}) and (\ref{Sigmadot}) can be obtained by specializing the general moment dynamics (\ref{mudot}) to the quasilinear case of affine $F$ and $G$. Alternatively, (\ref{alphadot}) is established by averaging both sides  of (\ref{qlin}) and using the special structure (\ref{rho0}) of the quantum state $\rho(0)$. In a similar vein, by averaging the quantum Ito differential $\rd (XX^{\rT})$, it follows that the matrix
\begin{equation}
\label{Pi}
    \Pi(t)
    :=
    \bE(X(t)X(t)^{\rT})
\end{equation}
of second moments of the plant observables satisfies the ODE
\begin{equation}
\label{Pidot}
    \dot{\Pi}
    =
    A \Pi + \Pi A^{\rT} + \beta\alpha^{\rT} + \alpha\beta^{\rT}
    +
    \bE (G\Omega G^{\rT}).
\end{equation}
The representation $\Omega = V  + iJ/2$ of the quantum Ito matrix $\Omega$ from (\ref{Omega}) implies that
\begin{align}
\nonumber
    G\Omega G^{\rT}
     = &
    GVG^{\rT} + iGJG^{\rT}/2
    =
    BVB^{\rT} - \cE(X) - \cV(XX^{\rT})\\
\label{GOG}
     & + i\big(BJB^{\rT} - i\cR(\Theta)/2\big)/2,
\end{align}
where we have also used (\ref{cE}), (\ref{cV}), Lemma~\ref{lem:const} and Theorem~\ref{th:CCR}. The averaging of (\ref{GOG}) yields
$
    \bE(G\Omega G^{\rT}) = BVB^{\rT} + \cR(\Theta)/4 -\cE(\alpha) - \cV(\Pi) + iBJB^{\rT}/2
$ whose substitution into (\ref{Pidot}) leads to
\begin{align}
\nonumber
    \dot{\Pi}
     = &
    A \Pi + \Pi A^{\rT} + \beta\alpha^{\rT} + \alpha\beta^{\rT} +BVB^{\rT} + \cR(\Theta)/4\\
\label{Pidot1}
     &
     -\cE(\alpha) - \cV(\Pi) + iBJB^{\rT}/2.
\end{align}
Since the matrix $\Pi$ from (\ref{Pi}) is representable as $\Pi = \Sigma + \alpha \alpha^{\rT} + i\Theta/2$, then (\ref{Sigmadot}) is obtained  by taking the real parts on both sides of (\ref{Pidot1}) and combining the result with (\ref{alphadot}). 
\end{proof}

Theorem~\ref{th:12} shows that, unlike the mean-covariance dynamics in the linear case (\ref{alphaSdot}), the Hurwitz property of the matrix $A$ is sufficient only for the stability of the quasilinear quantum plant (\ref{qlin}) at the level of the first order moments. In view of (\ref{Sigmadot}), such a plant is quadratically stable if the real parts of the eigenvalues of the linear operator $\Sigma \mapsto A\Sigma + \Sigma A^{\rT} - \cV(\Sigma)$, acting on the space $\mS_n$, are all negative. In this case, the steady-state values $\lim_{t\to +\infty}\alpha(t)$ and $\lim_{t\to +\infty}\Sigma(t)$ are unique solutions of the corresponding algebraic equations obtained by  equating the  right-hand sides of (\ref{alphadot}), (\ref{Sigmadot}) to zero. Also note that, for the quasilinear quantum plant,  the mean vector $\alpha$ influences the evolution of $\Sigma$ by entering the right-hand side of (\ref{Sigmadot}) in a quadratic fashion, whereas   the dynamics of the mean and covariances in the linear case (\ref{alphaSdot}) are completely decoupled.




\end{document}